\title{Thermodynamics of bipartite entanglement.}
\author{M.Sc. Thesis\\by\\Nikolaos K. Kollas}
\affil{Department of Theoretical Physics, University of Patras, Greece}
\date{}
\begin{document} 	
\maketitle
\abstract{A review is given on the thermodynamical structure of bipartite entanglement. By comparing it to the axiomatic formulation of thermodynamics presented by Giles it is shown that for finite dimensional systems the two theories are formally inequivalent. The same approach is used to demonstrate the full equivalence in the asymptotic limit for pure quantum states. For mixed states a different method for obtaining the second law is described applied to two different classes of operations, PPT-preserving and asymptotically non-entangling operations.  
}
\tableofcontents
\section{Introduction.}\label{chap1}
\pagenumbering{arabic}
\setcounter{page}{1}
Quantum mechanics is full of counter-intuitive notions, from Heisenberg's uncertainty principle to most recently the quantum-pigeonhole principle \cite{Aharonov2014}. Another example, which sparked great debate between prominent physicists such as Einstein and Schr{\"o}dinger and the proponents of quantum theory, is the notion of \emph{entanglement}. Entanglement is a property of composite quantum systems defined as the impossibility to separately determine the quantum state of it's parts, as a consequence it's components become `connected' and the results of individual measurements carried out on each component are correlated. This fact was the crux of the famous Einstein-Podolsky-Rosen (EPR) paradox \cite{Einstein1935a}, simplified later by Bohm \cite{Bohm1957} by considering a system in the singlet state 
\begin{labeq}{eq1.1}
\psi_s=\frac{1}{\sqrt{2}}(\ket{01}-\ket{10})
\end{labeq}
this is the canonical example of an \emph{entangled state}, physically it  can be realized as a system of two spin-$\frac{1}{2}$ particles (0 for spin directed on the positive axis of a given direction, 1 if it is directed on the negative axis). It is easy to see from \cref{eq1.1} that when the spin of one particle is measured along a direction, the spin of the second particle will always be found to have an opposite orientation, the main point of the argument is that this happens instantaneously. Suppose now that each particle is sent to two physicists located in different labs separated by a spacelike distance. Upon arrival each particle's spin is subsequently measured along a predefined direction. How is it possible then for the second particle to orientate it's spin accordingly depending on what the result of the measurement on the first one was when according to the special theory of relativity the positions of the two particles are causally unrelated? Einstein's answer was  it is not, the only alternative according to him is that the properties of each particle must be predetermined before separation, quantum mechanics is therefore not a complete physical theory but rather an approximation to an underlying  theory whose variables although unknown to us uniquely determine the properties of the particles, such theories are commonly known as \emph{hidden variable theories}.

It took almost thirty years to settle this argument when in 1964 Bell published a paper providing a means to test experimentally if such theories are physically possible \cite{Bell1964a}. This is done by checking if the expectation values for a specific setting of measurements of spin directions of the singlet state violates or not an elegant inequality named after Bell. A violation would be consistent with what is predicted by quantum mechanics but not by any hidden variable theory. This inequality was later generalized to make it more suitable for experiments \cite{Clauser1969}. To date Bell violations have been verified experimentally \cite{Christensen2013} corroborating the belief that quantum mechanics is fundamental to the description of physical reality.

Over the past twenty five years entanglement has also been recognized as a key ingredient in \emph{quantum information theory} allowing tasks such as teleportation \cite{Bennett1993}, dense coding \cite{Bennett1992a}, secure cryptography \cite{Ekert1991} and the reduction of communication complexity \cite{Buhrman2001} to be performed. Entanglement is also the reason behind the efficiency of quantum computers \cite{Society1985}, machines which if realized would dwarf the computational capabilities of even the most powerful classical computers. 

Most of these tasks require the use of singlets in order to be implemented successfully. Unfortunately such states are very hard to maintain due to interactions with their environment, making them a valuable resource much like bits are in classical information theory. It is therefore important to know when the entanglement of a given state is of the singlet type or has been degraded and whether it is possible or not to recover a singlet state in the latter case, this is the subject of the \emph{theory of entanglement}.

Many of the tools used in this theory are of thermodynamic origin such as the von-Neumann entropy or Landauer's principle \cite{Vedral1999a,Plenio2001b,Oppenheim2002}, conversely it has been shown that heat capacity can be used as a method of detecting entanglement \cite{Wiesniak2008}. Furthermore it is known that the amount of singlets left over after a quantum process is always less than the original a fact which is reminiscent of the second law of thermodynamics. It is therefore possible that the two theories are formally equivalent to each other. This would be extremely useful since we would be able to tell which processes in quantum information theory are possible or not by using the analogue of the second law for entanglement.

In this thesis we will present a detailed examination of such an equivalence reviewing available literature on the subject. In \cref{chap2} we review the axiomatic formulation of the classical theory of thermodynamics making it the basis for later discussions. \cref{chap3} contains the mathematical definition of entanglement, a description of the basic set of operations that are used along with some important entanglement processes, we conclude the section by giving the definition of entanglement measures along with some examples. In \cref{chap4} we discuss the differences between entanglement theory and thermodynamics with the help of an important theorem for entanglement processes between pure states. In \cref{chap5} we give a description of three different cases extending the theory to include infinite dimensional systems. In one of these cases entanglement and thermodynamics are proven to be formally equivalent to each other while for the other two the question remains open. \cref{chap6} contains concluding remarks. Proofs of some important theorems and a description of teleportation which were to lengthy to include in the main text can be found in the Appendix.
\section{The axiomatic formulation of Thermodynamics.}\label{chap2}
Perhaps one of the most striking features of the theory of thermodynamics and certainly the one which causes most confusion is the notion of entropy. Hidden inside the second law  it is usually connected to the amount of heat that is exchanged during a process or the amount of order of a given macroscopic state. In any case it possesses a characteristic property: it can never decrease. To formulate this property if $a$ is a state that during a process evolves into state $b$ then the entropy of the initial state $S(a)$ will always be \emph{less than or equal}	 to that of the final state $S(b)$, this property is known as the \emph{second law of thermodynamics}, whenever we refer to this law we shall tacitly assume this formulation.

Other versions of the second law also exist, most textbooks on thermodynamics usually include the following:
\begin{description}
	\item[Clausius:] There exists no process the sole result of which is to transfer heat from a colder body to a hotter one.
	\item[Kelvin-Planck:] There exists no process the sole result of which is to transform heat entirely into work.
\end{description}
It is immediately clear that these formulations are more complicated this might obscure rather than reveal it's fundamental importance. Furthermore they contain notions such as `heat' and `work' which require further definition. Carath\'eodory recognized early on the central role that processes between states play a fact he used to give his own formulation \cite{Carathodory1909}.
\begin{description}
	\item[Carath\'eodory:] In every neighborhood $N_x$ of any state $x$ there exist states arbitrarily close to it that are adiabatically inaccessible.
\end{description}
This is closer in spirit to the second law since emphasis is now given on states and processes rather than heat, nonetheless it still contains terms such as `neighborhood', `arbitrarily close' and `adiabatic' which need to be precisely defined. 

In 1964 the same year Bell published his paper, R. Giles provided a rigorous mathematical proof of the second law \cite{Giles1964} based on an axiomatic formulation of thermodynamics. Though many similar treatments on the subject exist (most notably those by Lieb and Yngvasson \cite{Lieb1999},\cite{Lieb2002},\cite{Lieb2004}) it is mostly a matter of preference which one chooses to work with. 

In this section we shall present a concise formulation of thermodynamics based on the former approach, we shall then talk about adiabatic transitions and how to construct an entropy function on the set of thermodynamic states. Proofs and any theorems mentioned below as well as a more detailed discussion the interested reader can find in \cite{Giles1964}.

\subsection{Thermodynamics in ideal form.}\label{sec2.1}
The \emph{classical theory of thermodynamics} can be formulated in \emph{ideal form}, i.e. as a set of \emph{primitive terms} which constitute the building blocks of the theory, \emph{rules of interpretation} needed to give a precise definition of these terms using notions derived by common experience and a set of \emph{Axioms}. A \emph{primitive observer} capable of performing experiments and equipped with the ideal form of the theory is able to reproduce the whole structure of classical thermodynamics, using primitive terms he can construct more complex terms and using Axioms he can start proving theorems. For a more detailed discussion on the ideal form of thermodynamics see \textsection 1.1 of \cite{Giles1964}.

The first primitive term to be introduced is that of a \emph{state}. Instead of a point in phase space a \emph{state} is defined by it's method of preparation, for example a state consisting of one mole of $NaCl$ is defined by the following procedure `'mix one mole of $Na^+$ and one mole of $Cl^-$ in a test tube''. Two states $a$ and $b$ are said to be \emph{equivalent} (denoted by $a=b$) if the results of any experiment performed on the former are indistinguishable from the same experiments performed on the latter. Notice that the notion of a system (the test tube in the previous example) is included in the method of preparation and does not need to be defined separately. The set of all states will henceforth be denoted by $\mathcal S$. Following Giles's convention states will be denoted by lowercase Latin letters $a,b,c\ldots$ while systems by uppercase $A,B,C$ etc. 

The remaining two terms are relations between states, namely \emph{state addition} (denoted by $+$) and \emph{natural transition} between states (denoted by $\to$). Specifically given states $a,b\in \mathcal S$ to obtain state $c=a+b$ one brings two systems $A$ and $B$ together and simply performs simultaneously the methods of preparation for states $a$ and $b$ respectively making sure meanwhile that the systems do not interact with each other, the composite system $C$ is now in state $c$. It is interesting to note that although any state of the form $a+b$ is a state of the combined system $A+B$ the converse is not necessarily true if the two subsystems interact with each other\footnote{This is also characteristic of quantum systems where a system can be in a separable or entangled state as we shall see later.}. Multiplication of state $a$ by a positive integer $n$ results in state $na$ which consists of $n$ systems all prepared in the same state.

Suppose now that after a lapse of time $\tau$ state $a$ becomes equivalent to state $b$, we then write $a\to b$ (read as $b$ is naturally accessible from $a$), a more exact definition is given with the help of an auxiliary state
\begin{defin}\label{def2.1}
	$a\to b$ if and only if there exists state $k\in \mathcal S$ such that ${a+k\to b+k}$
\end{defin}
If for some reason the auxiliary state $k$ is practically impossible to prepare or the time $\tau$ required for the evolution is to long we can for all practical purposes write $a\not\to b$. A pair of states $(a,b)$ is called a \emph{natural process}, if $a\to b$ or $b\to a$ then the process is called \emph{possible} otherwise if $a\not\to b$ and $b\not\to a$ both hold the process is called \emph{impossible}. A natural process in which both $a\to b$ and $b\to a$ hold is called \emph{reversible}.

The ideal formulation of the classical theory of thermodynamics is completed by giving the following axioms.	
\addtocounter{axm}{1}
\begin{subaxm}\label{ax1a}
	$a+b=b+a$ 
\end{subaxm}
\begin{subaxm}\label{ax1b}
	$a+(b+c)=(a+b)+c$ 
\end{subaxm}
\begin{axm}\label{ax2}
	$a\to a$
\end{axm}
\begin{axm}\label{ax3}
	$a\to b\;\;\&\;\; b\to c \implies a\to c$
\end{axm}
\begin{axm}\label{ax4}
	$a\to b \iff a+c\to b+c$
\end{axm}
\addtocounter{axm}{1}
\setcounter{subaxm}{0}
\begin{subaxm}\label{ax5a}
	$a\to b \;\;\&\;\; a\to c \implies b\to c\quad$ or $\quad c\to b $
\end{subaxm}
\begin{subaxm}\label{ax5b}
	$b\to a \;\;\&\;\; c\to a \implies b\to c\quad$ or $\quad c\to b$
\end{subaxm}
\begin{axm}\label{ax6}
	There exists an internal state e.
\end{axm}
\begin{axm}\label{ax7}
	If there exist states $x$ and $y$ such that $na+x\to nb+y$ holds for arbitrarily large positive integers $n$ then $a\to b$.
\end{axm}
\cref{ax1a,ax1b} are simple statements of the fact that addition of states is a commutative and associative operation respectively,  \cref{ax2,ax3} express the reflexive and transitive properties of $\to$ while \cref{ax4} is a direct consequence of \cref{def2.1}. \cref{ax5a,ax5b} state that if two processes have the same initial or final states then there always exists a natural process connecting the final and initial states respectively. With the help of these Axioms it is easy to prove that processes can also be multiplied by integers.
\begin{cor}\label{cor2.1}
	If $a\to b$ then $na\to nb$ for any positive integer $n$.	
\end{cor}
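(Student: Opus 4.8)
The plan is to argue by induction on $n$. The base case $n=1$ is nothing but the hypothesis $a\to b$ (reading $1\cdot a=a$, which is built into the definition of the integer multiple). For the inductive step I would assume $na\to nb$ and aim to establish $(n+1)a\to(n+1)b$, after first recording that $(n+1)a=na+a$ and $(n+1)b=nb+b$; this regrouping of the $n$-fold (resp. $(n+1)$-fold) sum is legitimate by the commutativity and associativity of state addition, \cref{ax1a,ax1b}.

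The key step is to bridge the two transitions $a\to b$ and $na\to nb$ through a common intermediate state using \cref{ax4} twice. Applying \cref{ax4} to $a\to b$ with auxiliary state $na$ gives $na+a\to na+b$; applying \cref{ax4} to the inductive hypothesis $na\to nb$ with auxiliary state $b$ gives $na+b\to nb+b$. Chaining these two by the transitivity \cref{ax3} yields $na+a\to nb+b$, that is, $(n+1)a\to(n+1)b$, which closes the induction. Commutativity \cref{ax1a} enters only as bookkeeping, to align which summand is written first when matching the two applications of \cref{ax4}.

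I do not expect any genuine obstacle: the statement is a purely formal consequence of the monoid structure on $\mathcal S$ given by \cref{ax1a,ax1b} together with the compatibility of $\to$ with addition (\cref{ax4}) and its transitivity (\cref{ax3}). The one point deserving a moment's care is the implicit reading of $na$ as an iterated sum, so that the associativity axiom actually entitles us to peel off a single copy of $a$ as $na+a$; once that is granted the proof is a two-line transition chase.
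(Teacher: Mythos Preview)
Your proof is correct and follows essentially the same induction argument as the paper: both use \cref{ax4} twice and \cref{ax3} once to pass from $na+a$ to $nb+b$. The only cosmetic difference is the choice of intermediate state---you go through $na+b$ while the paper goes through $nb+a$---which is an immaterial symmetry.
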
 
\begin{proof}
	The proof follows by induction, for $n=1$ the corollary is trivial, suppose now that it also holds true for some $n>0$, then by \cref{ax4} {$(n+1)a=na+a\to nb+a$} and $nb+a\to nb+b=(n+1)b$ , by \cref{ax3} $(n+1)a\to (n+1)b$ and since this is true for arbitrary $n$ the proof is now complete.
\end{proof}
\cref{ax6} is necessary in order to compare states, a state is called \emph{internal} if for any $a\in \mathcal S$ there exist positive integers $n,m$ and a state $c\in \mathcal{S}$ such that either $na+c\to nme$ or $nme\to na+c$ holds. The internal state serves the role of a unit of measure. \cref{ax7} is actually a consequence of a more complicated Axiom, nonetheless because of it's importance in later discussions and because a violation of this Axiom would imply that the one it is derived from is false we include it in the list. This Axiom states that a process remains unaffected if we remove at most an infinitesimal portion of the system.

\subsection{The second law of thermodynamics.}\label{sec2.2}
A question now arises, how can we distinguish a possible from an impossible process? This is easily accomplished by introducing a set of functions called the \emph{components of content} and with the help of the following theorem.
\begin{thrm}\label{thrm2.1}
	There exists a sufficient number of real valued functions of state $Q:\mathcal{S}\to\mathbb{R}$ called components of content with the following properties:
	\begin{enumerate}[i)]
		\item $Q(a+b)=Q(a)+Q(b)$
		\item\label{thrm1b} $(a,b)$ is a possible process if and only if $Q(a)=Q(b)$ for every component of content $Q$.
	\end{enumerate}
\end{thrm}
These components can be simply interpreted as the constraints during which a process takes place, typical examples are the volume of a system, its mass and it's total energy. The set of components of content forms a vector space called \emph{content space}. 

The second law of thermodynamics can also be stated as a theorem
\begin{thrm}[Second law of thermodynamics under natural processes.]\label{thrm2.2}
	There exists a positive real valued function $S$ on the set of states $S:\mathcal{S}\to\mathbb{R}$  called a quasi-entropy function with the following properties.
	\begin{enumerate}[i)]
		\item $S(a+b)=S(a)+S(b)$
		\item $a\to b\;\;\&\;\;b\not\to a \implies S(a)< S(b)$
		\item $a\to b\;\;\&\;\;b\to a \implies S(a)=S(b)$ 
	\end{enumerate}
\end{thrm}
The proofs of \cref{thrm2.1,thrm2.2} can be found in \textsection A.3 and \textsection A.4 of \cite{Giles1964} respectively. It is important to note that \cref{ax6,ax7} are essential in the construction of these proofs, their validity is thus crucial for the theory.

It is not hard to see that a quasi-entropy function is unique up to a positive scale factor and a change of reference.
\begin{lem}\label{lem2.1}
	Given a quasi-entropy function $S$ if $c$ is a positive constant and $Q$ a component of content then the function given by $S'=cS+Q$ is also a quasi-entropy function.
\end{lem}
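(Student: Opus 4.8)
The plan is to verify, one at a time, the three defining properties of a quasi-entropy function (\cref{thrm2.2}) for the candidate $S'=cS+Q$, using the additivity of both $S$ and $Q$ together with the fact recorded in \cref{thrm1b} that every component of content is constant along a possible process.

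Additivity is immediate: since $S(a+b)=S(a)+S(b)$ by hypothesis and $Q(a+b)=Q(a)+Q(b)$ by part~(i) of \cref{thrm2.1}, regrouping gives $S'(a+b)=c\,S(a+b)+Q(a+b)=\bigl(c\,S(a)+Q(a)\bigr)+\bigl(c\,S(b)+Q(b)\bigr)=S'(a)+S'(b)$.

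For the monotonicity properties I would first note that whenever $a\to b$ holds, the pair $(a,b)$ is a possible process, so \cref{thrm1b} forces $Q(a)=Q(b)$. Then, in the irreversible case $a\to b$ and $b\not\to a$, property~(ii) of the quasi-entropy function gives $S(a)<S(b)$; since $c>0$ this inequality survives multiplication by $c$, and adding the equal quantities $Q(a)=Q(b)$ to the two sides yields $S'(a)<S'(b)$. In the reversible case $a\to b$ and $b\to a$, property~(iii) gives $S(a)=S(b)$, and again $Q(a)=Q(b)$, so $S'(a)=S'(b)$.

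I do not expect a serious obstacle here: the one point that uses the structure of the theory rather than bare algebra is the appeal to \cref{thrm1b}, without which the $Q$-term could in principle reverse the strict inequality needed for property~(ii). One minor caveat worth flagging is positivity of $S'$ --- since $Q$ need not be non-negative, $cS+Q$ need not be a positive function on all of $\mathcal S$; this does not affect the role $S'$ plays in the second law, where only the affine comparison structure matters, but if literal positivity is required one rescales or restricts to the relevant states accordingly.
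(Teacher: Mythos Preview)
Your proof is correct and matches the paper's own argument essentially step for step: both establish additivity from that of $S$ and $Q$, and both derive the order properties by invoking $Q(a)=Q(b)$ on a possible process (\cref{thrm1b}) together with the corresponding property of $S$. Your remark on positivity is a fair side observation that the paper itself does not address.
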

\begin{proof}
	$S'$ satisfies the properties of \cref{thrm2.1}
	\begin{enumerate}[i)]
		\item $S'(a+b)=cS(a+b)+Q(a+b)=cS(a)+Q(a)+cS(b)+Q(b)=S'(a)+S'(b)$
		\item $S(a)\leq S(b)\iff \frac{S'(a)-Q(a)}{c}\leq \frac{S'(b)-Q(b)}{c}$ and since $(a,b)$ is a possible process it follows that $S'(a)\leq S'(b)$.
		\item Interchanging $a$ and $b$, $b\to a$ implies now that $S'(b)\leq S'(a)$ and since also $S'(a)\leq S'(b)$, $S'(a)=S'(b)$ follows.\qedhere
	\end{enumerate}
\end{proof}
\cref{lem2.1} implies that changing the unit in which quasi-entropy is being measured has no effect on the way the states are ordered.

\subsection{Adiabatic processes.}\label{sec2.3}
Up to this point there has been no mention of mechanical states or adiabatic processes, the theory of classical thermodynamics can be formulated without the need to introduce either one. Nonetheless the second law assumes a more useful formulation when the theory is extended to include such notions. This is because the relation induced by the natural transition between states turns $\mathcal S$ into a partially ordered set, i.e. there exist states $a,b\in \mathcal S$ such that neither $a\to b$ nor $b\to a$ hold. By introducing adiabatic processes the set becomes totally ordered parametrized uniquely by the entropy. To achieve this one more Axiom needs to be added to the list.
\begin{axm}\label{ax8}
	For any state $a$ there exists an anti-equilibrium state $x$ such that $x\to a$. If $x$ and $y$ are two such states then so is $x+y$.
\end{axm}
An \emph{anti-equilibrium state} is defined as follows.
\begin{defin}\label{def2.2}
A state $x\in\mathcal{S}$ is called an anti-equilibrium state if there exists no state $a\in\mathcal{S}$ such that $a\to x$ and $x\not\to a$.
\end{defin}
The set of \emph{mechanical states} denoted by $\mathcal M$ is simply a subset of the set of anti-equilibrium states. Since this subset can be arbitrarily defined $\mathcal{M}$ is chosen for simplicity to be the set of states satisfying \cref{ax8}. Note that by definition any process with initial and final mechanical states is reversible, also if $k$ is a mechanical state then so is $nk$ for any positive integer $n$. With the help of these states a definition of \emph{adiabatic} processes can now be given.
\begin{defin}\label{def2.3}
	State $b$ is said to be \emph{adiabatically accessible} by $a$ (written $a\prec b$) if there exist \emph{mechanical states} $k$ and $l$ such that $a+k\to b+l$.
\end{defin}
Any possible natural process is also adiabatic, if $a\to b$ then adding any state $k\in\mathcal{M}$ on both sides implies that $a\prec b$, the converse is not necessarily true. 

It is an easy task to prove that all of the Axioms introduced so far remain essentially unchanged if $\to$ is replaced by $\prec$ the same of course applies to any Theorem, Lemma, Corollary and Definition. \crefrange{ax2}{ax8} can then be substituted by:
\addtocounter{paxm}{1}
\begin{paxm}\label{pax2}
	$a\prec a$
\end{paxm}
\begin{proof}
As was already mentioned any possible natural process is also adiabatic.
\end{proof}
\begin{paxm}\label{pax3}
	$a\prec b\;\;\&\;\; b\prec c \implies a\prec c$
\end{paxm}
\begin{proof}
	Suppose $k,l,h,j\in \mathcal{M}$ such that $a+k\to b+l$ and $b+h\to c+j$, by \cref{ax4} $a+k+h\to b+l+h$ and $b+h+l\to c+j+l$ applying \cref{ax3} $a+k+h\to c+j+l$ and since according to \cref{ax8} the sum of two mechanical states is a mechanical state $a\prec c$ follows.
\end{proof}
\begin{paxm}\label{pax4}
	$a\prec b \iff a+c\prec b+c$
\end{paxm}
\begin{proof}
	Suppose there exist states $k,l\in \mathcal{M}$ such that $a+k\to b+l$ by \cref{ax4} ${a+c+k\to b+c+l}$ so $a+c\prec b+c$. Conversely $a+c\prec b+c$ implies that there exists $k,l\in\mathcal{M}$ such that $a+c+k\to b+c+l$, by \cref{ax4} $a+k\to b+l$ and $a\prec b$.
\end{proof}
\addtocounter{paxm}{1}
\begin{paxm}\label{pax6}
	There exists an internal state e.
\end{paxm}
\begin{proof}
Same as that of \cref{pax2}.
\end{proof}
\begin{paxm}\label{pax7}
	If there exist states $x$ and $y$ such that $na+x\prec nb+y$ holds for arbitrarily large  positive integers $n$ then $a\prec b$.
\end{paxm}
\begin{proof}
Suppose there exists a sequence of mechanical states $k_n,l_n$ and ${x,y\in\mathcal{S}}$ such that for every positive integer $n$ $na+x+k_n\to nb+y+l_n$. By definition any mechanical state is accessible by any other under a natural process, this means that $nb+y+l_n\to nb+y+k_n$, combining this with the previous process results in $na+x+k_n\to nb+y+k_n$ and by \cref{ax4} $na+x\to nb+y$. Finally  by \cref{ax7} $a\to b$ so $a\prec b$. 
\end{proof}
\begin{paxm}\label{pax8}
	For any state $a$ there exists an anti-equilibrium state $x$ such that $x\prec a$. If $x$ and $y$ are two such states then so is $x+y$.
\end{paxm}
\begin{proof}
	Since $x\to a$ for any $k\in\mathcal{M}$, $x+k\to a+k$ so $x\prec a$.
\end{proof}
\cref{ax5a,ax5b} are no longer necessary since under adiabatic processes $\mathcal S$ becomes a totally ordered set.
\begin{thrm}\label{thrm2.3}
	Given any two states $a,b\in \mathcal S$ either
	$$a\prec b\quad\mbox{ or }\quad b\prec a$$
\end{thrm}
\begin{proof}
	Suppose $k,l\in \mathcal M$ such that $k\to a $ and $l\to b$. By \cref{ax4} ${k+l\to a+l}$ and $k+l\to k+b$ and by \cref{ax5a} either $a+l\to k+b$ or $k+b\to a+l$. It immediately follows that either $a\prec b$ or $b\prec a$.
\end{proof}
According to \cref{thrm2.3} every adiabatic process $(a,b)$ is now possible, components of content are therefore redundant in this case and are no longer needed.

\cref{thrm2.2} can now be replaced by.
\addtocounter{pthrm}{1}
\begin{pthrm}[Second law of thermodynamics under adiabatic processes]\label{pthrm2.2}
	There exists a positive real valued function $S$ on the set of states $S:\mathcal{S}\to\mathbb{R}$ called an entropy function with the following properties.
	\begin{enumerate}[i)]
		\item $S(a+b)=S(a)+S(b)$
		\item $S(k)=0\quad\forall k \in \mathcal M$ 	
		\item $a\prec b\;\;\&\;\;b\not\prec a \iff S(a)< S(b)$
		\item $a\prec b\;\;\&\;\;b\prec a \iff S(a)=S(b)$
	\end{enumerate}
\end{pthrm}
\begin{proof}
$S$ is the same function as the one in \cref{thrm2.2} so additivity follows immediately. As was mentioned earlier for any $k\in\mathcal{M}$ $(k,nk)$ is a reversible process for any positive integer $n$, this implies that $S(k)=nS(k)$ from which $S(k)=0$. The rest of the properties follow immediately from \cref{thrm2.2,thrm2.3} and properties i) and ii).
\end{proof}
At this point we must draw attention to the difference between Theorems \ref{thrm2.2} and \ref{pthrm2.2}. Specifically under natural transitions the condition that state $a$ has less entropy than state $b$ is only a necessary condition for $a\to b$ while under adiabatic transitions the same condition is also sufficient  this is of course a direct consequence of \cref{thrm2.3}. For this reason we shall refer to \cref{thrm2.2} as the \emph{weak form of the second law} and \cref{pthrm2.2} as the \emph{strong form}. The uniqueness of an entropy function has one more constraint than the one for quasi-entropy.
\begin{lem}\label{lem2.2}
	Given an entropy function $S$ if $c$ is a positive constant and $Q$ a non-mechanical component of content then the function given by ${S'=cS+Q}$ is also an entropy function. A component of content $Q$ is called \emph{non-mechanical} if $Q(k)=0,\forall k \in \mathcal{M}$.
\end{lem}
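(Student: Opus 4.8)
The plan is to check the four defining properties of an entropy function from \cref{pthrm2.2} directly for $S'=cS+Q$, in the spirit of the proof of \cref{lem2.1} but keeping track of the one place where non-mechanicality of $Q$ is indispensable. Additivity is immediate from the additivity of $S$ and of $Q$, since $S'(a+b)=cS(a+b)+Q(a+b)=cS(a)+cS(b)+Q(a)+Q(b)=S'(a)+S'(b)$. For property ii) I would note that any $k\in\mathcal M$ already satisfies $S(k)=0$ by \cref{pthrm2.2} applied to $S$, while $Q(k)=0$ holds precisely by the assumption that $Q$ is non-mechanical; hence $S'(k)=c\cdot 0+0=0$. This is the step that fails for an arbitrary component of content and explains why \cref{lem2.2} carries an extra hypothesis compared with \cref{lem2.1}.

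The heart of the argument is a small observation: a non-mechanical component of content is constant along adiabatic accessibility. Indeed, if $a\prec b$ then by \cref{def2.3} there are mechanical states $k,l$ with $a+k\to b+l$, so $(a+k,b+l)$ is a possible natural process and \cref{thrm2.1} gives $Q(a+k)=Q(b+l)$; by additivity this reads $Q(a)+Q(k)=Q(b)+Q(l)$, and since $Q(k)=Q(l)=0$ we obtain $Q(a)=Q(b)$. With this in hand, properties iii) and iv) for $S'$ follow from those for $S$ together with $c>0$: if $a\prec b$ and $b\not\prec a$ then $S(a)<S(b)$ and $Q(a)=Q(b)$, so $S'(a)=cS(a)+Q(a)<cS(b)+Q(b)=S'(b)$; and if $a\prec b$ and $b\prec a$ then $S(a)=S(b)$ and $Q(a)=Q(b)$, so $S'(a)=S'(b)$.

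For the converse implications I would invoke \cref{thrm2.3}: for any pair $a,b$ exactly one of the three mutually exclusive alternatives ``$a\prec b$, $b\not\prec a$'', ``$b\prec a$, $a\not\prec b$'', ``$a\prec b$, $b\prec a$'' occurs, and the forward implications just established send these to $S'(a)<S'(b)$, $S'(a)>S'(b)$, $S'(a)=S'(b)$ respectively; since these three orderings of the reals are themselves exhaustive and mutually exclusive, each value comparison of $S'$ forces back the corresponding adiabatic relation. I do not anticipate a genuine obstacle: the only point needing any care is the observation that a non-mechanical $Q$ is insensitive to $\prec$, which is exactly what simultaneously secures property ii) and lets the ordering statements reduce to those for $S$. (If one insists on $S'$ being positive as literally stated in \cref{pthrm2.2}, this is arranged by a further additive shift just as for \cref{lem2.1} and is immaterial to the structure above.)
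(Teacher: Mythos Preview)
Your proof is correct and follows essentially the same route as the paper's: verify the four properties of \cref{pthrm2.2} directly, using additivity of $S$ and $Q$, the vanishing of both on $\mathcal{M}$, and the fact that $Q(a)=Q(b)$ whenever $a\prec b$. You are in fact more careful than the paper on this last point---the paper simply asserts ``since $(a,b)$ is a possible process $Q(a)=Q(b)$'', whereas you unpack this via \cref{def2.3} and non-mechanicality---and you also spell out the converse implications through the trichotomy of \cref{thrm2.3}, which the paper leaves implicit.
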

\begin{proof}
	$S'$ satisfies the properties of \cref{pthrm2.2}.
	\begin{enumerate}[i)]
		\item $S'(a+b)=bS(a+b)+Q(a+b)=cS(a)+Q(a)+cS(b)+Q(b)=S'(a)+S'(b)$
		\item If $k\in \mathcal M$ then $S'(k)=cS(k)+Q(k)=Q(k)$ which equals zero because $Q$ is a non-mechanical component of content.
		\item $S(a)\leq S(b)\iff \frac{S'(a)-Q(a)}{c}\leq \frac{S'(b)-Q(b)}{c}$ and since $(a,b)$ is a possible process $Q(a)=Q(b)$ and $S'(a)\leq S'(b)$.
		\item Interchanging $a$ and $b$ $b\to a$ now implies $S'(b)\leq S'(a)$ and since also $S'(a)\leq S'(b)$, $S'(a)=S'(b)$ follows easily.\qedhere	
	\end{enumerate}
\end{proof}
The amount of entropy a state has can now be determined quantitatively. The first step is to choose a reference state $e$ and use it's entropy as the unit in which entropy is measured for all other states. Since $\mathcal S$ is totally ordered under adiabatic processes one simply constructs for any state $a\in\mathcal{S}$ the following sets:
\begin{equation}\label{eq2.1}
U(a):=\Set{\frac{m}{n}|na\prec me}
\end{equation}
and
\begin{equation}\label{eq2.2}
L(a):=\Set{\frac{m}{n}|me\prec na}
\end{equation}
It is not hard to see that any element of $U(a)$ is an upper bound to any element of $L(a)$ and conversely any element of $L(a)$ a lower bound to $U(a)$ in fact it can be shown that.
\begin{thrm}\label{thrm2.4}
	$\max{L(a)}=\min{S(a)}$.
\end{thrm}
\begin{proof}
	Suppose that $\frac{m'}{n'}$ and $\frac{m}{n}$ are the maximum and minimum of $L(a)$ and $U(a)$ respectively, since $\mathcal{S}$ is totally ordered under adiabatic transitions either $na+n'a\prec me+m'e$ or $me+m'e\prec na+n'a$ must hold.
	\begin{enumerate}[a)]
		\item If $na+n'a\prec me+m'e$ then $\frac{m}{n}\leq\frac{m+m'}{n+n'}\implies\frac{m}{n}\leq \frac{m'}{n'}$
		\item If $me+m'e\prec na+n'a$ then $\frac{m+m'}{n+n'}\leq\frac{m'}{n'}\implies\frac{m}{n}\leq \frac{m'}{n'}$
	\end{enumerate} 
but $\frac{m'}{n'}\leq\frac{m}{n}$ so $\frac{m'}{n'}= \frac{m}{n}$ follows. \qedhere
\end{proof}
\cref{thrm2.4} implies the existence of a unique reversible process between multiple copies of $a$ and $e$.
\begin{thrm}\label{thrm7}
	There exist positive integers $n$ and $m$ such that $$na\sim me$$ 
\end{thrm}
\noindent where $\sim$ is used to denote that both $na\prec me$ and $me\prec na$ hold.
\begin{proof}
	Suppose once more that $\frac{m'}{n'}$ and $\frac{m}{n}$ are the maximal and minimal elements of $L(a)$ and $U(a)$ respectively so that $na\prec me$ and $m'e\prec n'a$, by \cref{thrm2.4} $n'na\prec n'me=nm'e$ from which $n'a\prec m'e$, $me\prec na$ is proven similarly.
\end{proof}
With the help of \cref{thrm2.4} the entropy of $a$ is defined as the Dedekind cut of  $U(a)$ and $L(a)$ (see \cref{fig2.1}).
\begin{figure}
	\begin{center}
		\begin{tikzpicture}
			\fill (0,0)circle(1pt)node[left]{$0$};
			\fill[gray,opacity=0.5] (3,-0.5)rectangle(10,0.5);
			\draw(1.5,-0.5)node[below]{$L(a)$};
			\draw(6.5,-0.5)node[below]{$U(a)$};
			\draw[thick,->](0,0)--(10,0)node[right]{$\mathbb{R}$};
			\draw[dashed](3,-0.8)node[below]{$S(a)$}--(3,0.8);
		\end{tikzpicture}
	\end{center}
\caption{Entropy as the Dedekind cut of sets $U(a)$ and $L(a)$.}\label{fig2.1}
\end{figure}
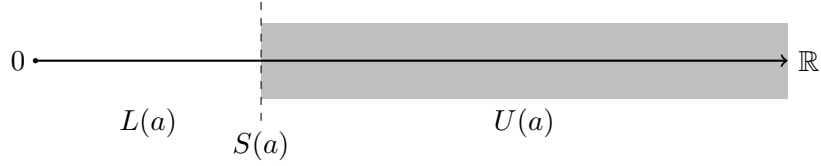
\begin{equation}\label{eq2.3}
	S(a):=\sup L(a)=\inf U(a)
\end{equation}
The total ordering of the set of states under adiabatic transitions is thus equivalent to the existence of a unique function of state which in turn uniquely characterizes the ordering through the strong form of the second law.

Any theory where notions of state, state unions and transitions between states can be defined is formally equivalent to the classical theory of thermodynamics as long as it satisfies Axioms \ref{ax1a}-\ref{ax8}. If this is the case then an entropy function automatically exists obeying the weak form of the second law. By extending the theory to include adiabatic transitions the set of states will now become totally ordered and it will be possible to quantitatively determine the entropy of any state by identifying a reversible process between $n$ copies of the state and $m$ copies of a suitably chosen reference state, this entropy function will now satisfy the strong form of the second law and can be employed to determine which states are adiabatically accessible by others in a unique way. A possible candidate for such a theory is the theory of \emph{entanglement}. Though at present it is not known if in general it possesses a thermodynamical structure, for some simple special cases the two theories become formally equivalent.
\section{Entanglement.}\label{chap3}
In order to compare thermodynamics with the theory of entanglement, a formal definition of the later must be given first. This is the objective of the current section. We will only be interested in discrete bipartite quantum systems of a finite dimensional Hilbert space each composite system of which is held by a different party. Despite it's simplicity this case is far from trivial as we don't possess a complete picture of the theory and many questions remain unanswered to this day. A very good review which covers a wide range of subjects on the theory with an exhaustive list of references can be found in \cite{Horodecki2009}.

\paragraph{\textmd {Notations:}} A general bipartite system will be referred to as a $d_A\times d_B$ system where $d_X=dim\mathcal{H}_X$ and $\hilb{A}$ and $\hilb{B}$ are the Hilbert spaces of subsystems $A$ and $B$ respectively. The set of density matrices acting on Hilbert space $\hilb{}$ i.e. positive semi-definite operators with unit trace will be denoted by $\mathcal D_{\mathcal{H}}$. $\norm{\rho}=Tr\sqrt{\rho^\dagger\rho}$ is the trace norm in Hilbert-Schmidt space. Ket notation for pure states will usually be employed only for eigenbases and the computational basis. Any logarithms appearing in the text are of base two.

\subsection{Separable and Entangled states.}\label{sub3.1}
A bipartite entangled system consists of a pair of particles prepared at the same point in spacetime initially interacting with each other under the influence of a common interaction Hamiltonian $H_{int}$. Each particle is subsequently sent to a different lab separated by a spacelike distance where physicists Alice and Bob intercept them.  The pair of particles is now in an \emph{entangled state} which possesses ``spooky action at a distance'' as Einstein called it i.e. instantaneous non-local correlations between the results of suitably chosen experiments that the two observers perform on their respective particles. 

The simplest case to consider is when the system is in a pure state. The first formal definition of entanglement was given by Werner \cite{Werner1989}.
\begin{defin}\label{def3.1}
A bipartite state $\Phi_{\scriptscriptstyle AB}$ of a $d_A\times d_B$  system is called \emph{separable} (\emph{entangled}) if it can (cannot) be written as a product state	
	\begin{labeq}{eq3.1}
		\Phi_{\scriptscriptstyle AB}=\varphi_{\scriptscriptstyle A}\otimes\psi_{\scriptscriptstyle B}
	\end{labeq}
with $\varphi_{\scriptscriptstyle A}\in\mathcal{H}_A$ and $\psi_{\scriptscriptstyle B}\in\mathcal{H}_B$.
\end{defin}
The classic example of a pure entangled state is given by the singlet state of a $2\times2$ (qubit-qubit) system given in \cref{eq1.1}. This state along with the following
\begin{align}
	\Psi^+_{\scriptscriptstyle AB}&=\frac{1}{\sqrt{2}}(\Ket{01}+\Ket{10}) \\
	\Phi^+_{\scriptscriptstyle AB}&=\frac{1}{\sqrt{2}}(\Ket{00}+\Ket{11}) \\
	\Phi^-_{\scriptscriptstyle AB}&=\frac{1}{\sqrt{2}}(\Ket{00}-\Ket{11})\label{eq3.3}
\end{align}
constitute an orthonormal basis of the four dimensional Hilbert space of the composite system known as a \emph{Bell basis}. 

For any state $\varphi$ it is easy to infer whether it is entangled or not by simply writing down its \emph{Schmidt decomposition}
\begin{labeq}{eq3.4}
	\varphi_{\scriptscriptstyle AB}=\sum_{i=1}^{d}\sqrt{\lambda_i}\ket{ii}
\end{labeq}
where $d=\min\set{d_A,d_B}$ is the \emph{Schmidt rank} of the state and $\lambda_i$ are real positive numbers with $\sum_i\lambda_i=1$  known as the \emph{Schmidt coefficients} (for proof see \cref{appa}). If the coefficients are ordered in decreasing order such that $\lambda_1\geq\lambda_2\geq\ldots\geq\lambda_d$ then \cref{eq3.4} is called an \emph{ordered Schmidt decomposition} (OSC). It is immediately clear that the state is separable if d=1 and entangled otherwise. 

By taking the partial trace of \cref{eq3.4} it is easy to show that each observers subsystem is in a mixed state.

\begin{labeq}{eq3.5}
\rho_{\scriptscriptstyle X}=\sum_i\lambda_i\ketbra{i}{}
\end{labeq}
where $X=A,B$. 

A \emph{maximally entangled} state in $d$ dimensions is given by
\begin{labeq}{eq3.6}
	\Psi_d^+=\frac{1}{\sqrt{d}}\sum_{i=1}^{d}\indexket{ii}{\scriptscriptstyle AB}
\end{labeq}
the reason why the term maximally is used for this state will become clearer later. According to this definition all of the states in the Bell basis are maximally entangled states of the 4-dimensional Hilbert space.

With the help of the pure state case it is now easy to extend \cref{def3.1} to include systems in mixed states. 
\begin{defin}\label{def3.2}
A mixed state is called \emph{separable} (\emph{entangled}) if it can (cannot) be written as a convex combination  of separable pure states.
	\begin{labeq}{eq3.7}
		\rho_{\scriptscriptstyle AB}=\sum_i p_i\ketbra{\varphi_i}{}\otimes\ketbra{\psi_i}{}
	\end{labeq}
with $p_i\geq 0$ and $\sum_i p_i=1$.
\end{defin}
Note that in the special case where $p_i=\delta_{ij}$ \cref{def3.1,def3.2} coincide. In contrast to the pure case it is usually very hard to determine if a given mixed state is entangled or not. One reason for this difficulty lies in the fact that a mixed state can be written in infinitely many ways, another is that by definition a mixed state already possesses a degree of correlation given by the probability distribution in \cref{eq3.7}. This complicates matters when one is trying to distinguish which correlations present in a mixed state are of classical origin or are purely quantum in nature. It is imperative then to try and devise new methods to determine if a given mixed state is entangled or not.

For $2\times 2$ and $2\times 3$ systems there exists a necessary and sufficient condition for separability discovered by Peres \cite{Peres1996} and proven as a theorem in \cite{Horodecki1996}.
\begin{thrm}\label{thrm3.1}
A bipartite mixed state $\rho$ is separable if and only if 
	\begin{labeq}{eq3.8}
		\rho^{T_X}\geq 0\qquad X=A,B
	\end{labeq}
\end{thrm}
Recall that a density matrix is positive if for any state $\ket{\phi}$ of the system's Hilbert space $\braket{\phi|\rho|\phi}\geq 0$. $T_X$ denotes the partial transposition with respect to the chosen observer's basis, for example if $X=A$ then the components of $\rho^{T_A}$ are given by
\begin{labeq}{eq3.9}
	(\rho_{\mu m,\nu n})^{T_A}=\rho_{\nu m,\mu n}
\end{labeq} 
where Greek indices refer to Alice's components and Latin to Bob's. It is very interesting to note that for such systems there is a direct connection between separability and the local time arrow as was pointed out in \cite{Sanpera1997}. It can be shown that partial transposition is directly related to time reversion in one of the subsystems, this affects only entangled states causing their partially transposed matrices to have negative eigenvalues.

For systems of higher dimensions the condition of positivity of the partial transpose is a necessary but not a sufficient condition. Reference \cite{Bruss2001} contains a short review on the separability problem.
\subsection{Local Operations and Classical Communication.}\label{sub3.2}
Most of the applications mentioned in the introduction require maximally entangled states to be shared between two parties, in order to be implemented without errors. These states are very difficult to maintain for long periods of time due to interaction with their environment, for this reason they are considered to be a valuable resource in the context of quantum information theory. On the other hand if Alice and Bob are allowed to communicate classically it is easy for them to prepare any separable state they wish even if it is mixed\footnote{Alice simply determines the value of a discrete random variable $i$ occurring with probability $p_i$, she then informs Bob about her outcome via a classical channel (e.g a phone line) after which they both prepare the state $\ketbra{\varphi_i}{A}$ and $\ketbra{\psi_i}{B}$. Eventually they discard the information of the original outcome by placing their particles in a reservoir. After repeating the procedure a large number of times, if they now draw at random from the reservoir a pair of entangled particles their combined state will be given by \cref{eq3.7}.}, these states are therefore considered as a free resource. Suppose now that being limited by a budget they share as many entangled states as possible and although they cannot create any new ones by sending particles to each other are nonetheless allowed to exchange as many classical messages they wish, the question that naturally arises is what kinds of operations can they both implement under the above restriction. The answer is the set of \emph{Local operations and Classical Communication} or (LOCC) for short. This set consists of local quantum operations carried out on each observers subsystem $\rho_A=\tr{B}{\rho}$ and $\rho_B=\tr{A}{\rho}$ assisted by an unlimited amount of classical communication between their laboratories, this allows them to correlate their actions enhancing their operational capabilities. 

Mathematically a quantum operation is a linear function on the set of density matrices $\mathcal{D}_\mathcal{H}$ of a given Hilbert space $\mathcal{H}$. If $\Lambda$ is such an operation with the property that any state $\rho\in\mathcal{D}_\mathcal{H}$ is mapped to a positive operator of a different Hilbert space $\mathcal{H'}$ then it is called a \emph{positive quantum operation}. The effect of any quantum operation on a state can be written with the help of the \emph{operator-sum decomposition}
\begin{labeq}{eq3.10}
	\Lambda(\rho)=\sum_iE_i\rho E^{\dagger}_i
\end{labeq}
where $E_i$ are the so called \emph{Kraus operators}. If $\sum_iE^{\dagger}_iE_i=I$ it is easy to show that $\tr{}{\Lambda(\rho)}=1$, in this case $\Lambda$ is said to be trace preserving and is called a \emph{deterministic} quantum operation. A deterministic operation always maps density matrices to density matrices. For a non-trace preserving operation $\sum_iE^{\dagger}_iE_i\leq I$ and $\tr{}{\Lambda(\rho)}\leq1$. 

A \emph{generalized measurement} carried out on state $\rho$ is described by an ensemble $\Set{p_i,\rho_i}$ where state $\rho_i=\frac{\mathcal{E}_i(\rho)}{\tr{}{\mathcal{E}_i(\rho)}}$ is measured with probability $p_i=\tr{}{\mathcal{E}_i(\rho)}$ and $\mathcal{E}_i$ are non-trace preserving operations. Any deterministic operation can be written as the ensemble average of a generalized measurement such that
\begin{labeq}{eq3.11} 
\Lambda(\rho)=\sum_i\mathcal{E}_i(\rho)=\sum_ip_i\rho_i
\end{labeq}
The most simple examples of deterministic operations and generalized measurements are the unitary operation and von-Neumann measurements respectively. For a brief introduction to the theory of quantum operations see \cite{Schumacher1996,Nielsen1997}.

A LOCC operation can now be described by a sequence of successive applications of an Alice quantum operation of the form $\Lambda_A\otimes I_B$ and a Bob operation $I_A\otimes\Lambda_B$ assisted by classical communication after each step. Depending on whether communication is made in only one direction or two LOCC operations are divided into two classes 1-LOCC and 2-LOCC respectively.

The characteristic property of such a transformation is that it is impossible to create entanglement from any separable state by using LOCC operations alone, this property which can be taken as an alternative definition of LOCC imposes some strong restrictions on the theory.
\begin{lem}\label{lem3.1}
If $\Lambda\in LOCC$ then
	\begin{labeq}{eq3.13}
		\forall \sigma\in \mathcal S \qquad \Lambda(\sigma)\in \mathcal S
	\end{labeq}
where $\mathcal S$ denotes the set of separable states.
\end{lem}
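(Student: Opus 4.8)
The plan is to exploit the explicit operational description of LOCC given above: any $\Lambda\in LOCC$ is a finite sequence of maps, each of the form $\Lambda_A\otimes I_B$ or $I_A\otimes\Lambda_B$, interleaved with classical communication that allows the choice of the next map to depend on all outcomes recorded so far. Since the set of separable states is convex (essentially by \cref{def3.2}) and classical communication only introduces a classical register over which one averages, it suffices to prove the single-round statement: if $\sigma$ is separable then $(\Lambda_A\otimes I_B)(\sigma)$ is separable for any deterministic local operation $\Lambda_A$, and symmetrically for Bob. The full claim then follows by induction on the number of rounds, using the fact that "is separable" is preserved under composition of such maps.

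For the single-round step, write $\sigma$ in the form \cref{eq3.7}, $\sigma=\sum_j p_j\,\ketbra{\varphi_j}{}\otimes\ketbra{\psi_j}{}$, and expand $\Lambda_A$ in its operator-sum form \cref{eq3.10}, $\Lambda_A(\cdot)=\sum_i E_i(\cdot)E_i^\dagger$ with $\sum_i E_i^\dagger E_i=I$. Then
\[
(\Lambda_A\otimes I_B)(\sigma)=\sum_{i,j}p_j\,\big(E_i\ketbra{\varphi_j}{}E_i^\dagger\big)\otimes\ketbra{\psi_j}{}.
\]
Each operator $E_i\ketbra{\varphi_j}{}E_i^\dagger$ is positive semi-definite on $\hilb{A}$, hence by its spectral decomposition it is a non-negative combination of pure states on $\hilb{A}$; substituting, collecting the non-negative weights and normalising (both positivity and overall unit trace being automatic since $\Lambda_A$ is deterministic), the right-hand side is a convex combination of product pure states, i.e. a separable state by \cref{def3.2}. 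The same computation with $A$ and $B$ interchanged handles a Bob round.

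It remains to account for classical communication. A round with communication is a generalised measurement as in \cref{eq3.11}: on the branch with outcome $i$ the state becomes $\sigma_i=\mathcal{E}_i(\sigma)/\tr{}{\mathcal{E}_i(\sigma)}$ with probability $p_i=\tr{}{\mathcal{E}_i(\sigma)}$, where $\mathcal{E}_i$ is a non-trace-preserving local operation of the form $\mathcal{E}^{(A)}_i\otimes I_B$ (or the $B$-version). Applying the previous paragraph to the unnormalised operator $\mathcal{E}_i(\sigma)$ shows each $\sigma_i$ is separable; the subsequent operations on branch $i$ are again local, so by the induction hypothesis they map $\sigma_i$ to a separable state $\tau_i$; and the deterministic map realised by the whole protocol outputs $\sum_i p_i\tau_i$, a convex combination of separable states, hence separable by \cref{def3.2}. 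This closes the induction, and with it the proof.

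I expect the only real subtlety to be purely bookkeeping: one must make precise that an arbitrary element of LOCC genuinely decomposes into such rounds, and that allowing the next local map to depend on all previous outcomes does not break the inductive structure — i.e. the induction should be phrased over the tree of classical outcomes rather than over a linear sequence of steps. No genuinely hard estimate is involved; the mathematical content is simply that local completely positive maps send product operators to product operators and that the separable states form a convex set stable under such maps.
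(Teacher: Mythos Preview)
Your argument is correct and follows the same line as the paper's own proof, which simply observes that an LOCC map is a sequence of Alice/Bob local operations and that each such operation manifestly sends separable states to separable states. Your version is considerably more detailed (explicit Kraus decomposition, branch-by-branch treatment of classical communication, induction over the outcome tree), but the underlying idea is identical.
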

\begin{proof}
	By definition LOCC is a sequence of either Alice or Bob local quantum operations, it is easy then to see that any such operation acting on a separable state will always result in a new separable state.\qedhere
\end{proof}

By lifting certain restrictions it is possible to define new classes of operations which might allow processes that were previously impossible under LOCC. For example the two parties might be allowed to exchange quantum information as well. This will prove useful in deriving the second law for entanglement processes in later sections. An interesting example of such a class of operations is the set of \emph{separable operations} with Kraus operators of the form $E_i=A_i\otimes B_i$, although the set of one or two way LOCC operations belongs to this set the converse is not true \cite{Rains1997,Bennett1999b}.

\subsection{Distillation and dilution.}\label{sec3.3}

As was already mentioned maximally entangled states are extremely sensitive to noise making them very hard to transfer and store. In practice due to interaction with their environment they quickly decohere into a less entangled mixed state.
\begin{labeq}{eq3.15}
	\Psi_d^+\to\rho
\end{labeq}
This poses a major problem in quantum information theory since mixed states perform poorly if used in various protocols. This has led to the development of methods with the aim of reversing \cref{eq3.15} allowing the recovery of maximally entangled states from less entangled ones known as \emph{entanglement distillation protocols}. 

The first of these types of protocols was introduced by Bennett et al. for the special case where two parties share a large number of copies of a given pure state \cite{Bennett1996} . Without loss of generality suppose that Alice and Bob wish to distill a pure state $\phi$ of a $2\times 2$ system. In \cref{appa} it is shown that this state can always be written in it's Schmidt decomposed form
\begin{labeq}{eq3.16}
	\phi_{\scriptscriptstyle AB}=\alpha\ket{00}+\beta\ket{11}
\end{labeq}
where $\alpha$ and $\beta$ are real constants with $\alpha^2+\beta^2=1$. With the help of an operator $S$ which acts on product states defined by 
\begin{labeq}{eq3.17}
	S(\phi\otimes\psi)=\phi\otimes\psi+\psi\otimes\phi
\end{labeq}
the state composed of $n$ copies of \cref{eq3.16} can now be written as
\begin{labeq}{eq3.18}
	\phi^{\otimes n}_{\scriptscriptstyle AB}=\sum_{k=0}^{n}\alpha^k\beta^{n-k}S\left(\multiket{00}{k}\multiket{11}{n-k}\right)
\end{labeq}
The set $\left\{\ket{S_k}\right\}_{k=1}^n$ given by $\ket{S_k}=\binom{n}{k}^{-\frac{1}{2}}S\left(\multiket{00}{k}\multiket{11}{n-k}\right)$ forms an orthonormal set of a $2^{2n}$ dimensional Hilbert space. Each state in this set is a normalized sum of all $\binom{n}{k}$ possible sequences containing $k$ copies of $\ket{00}$ and $n-k$ copies of $\ket{11}$ with equal weights. 

Alice begins the protocol by performing a von-Neumann measurement to determine the value of $k$. If $\Pi_k^n$ denotes the set of all possible permutations of $k$ zeros and $n-k$ ones, then $\{\ketbra{s}{A}\}_{s\in \Pi_k^n}$ is a set of orthonormal operators in $\hilb{A}$ and the probability of obtaining result $k$ is given by 
\begin{labeq}{eq3.19}
	p_k=\sum_{s\in \Pi_k^n}\braket{s|\tr{B}{\ketbra{\phi^{\otimes n}}{}}|s}=\binom{n}{k}a^{2k}b^{2(n-k)}
\end{labeq}
note that $p_k$ is simply the square of the coefficient appearing in front of $\ket{S_k}$ in \cref{eq3.18}. After she has completed her measurement Alice then informs Bob of the result\footnote{In this case classical communication is not necessary since Bob can always perform the same measurement on his subsystem which will always result in obtaining result $k$.}.

In the limit of a large number of initial copies it is an immediate consequence of the law of large numbers that there exist $\epsilon_n>0$ and $\delta_n>0$ with $\lim_{n\to\infty}\epsilon_n\to 0$ and  $\lim_{n\to\infty}\delta_n\to 0$ such that  
\begin{labeq}{eq3.20}
	\sum_{k=\lfloor n(\alpha^2-c\delta_n)\rfloor}^{\lceil n(\alpha^2+c\delta_n)\rceil}p_k\geq(1-\epsilon_n)
\end{labeq} 
where $c$ is a positive constant. The left hand side of \cref{eq3.20} is simply the total probability that a given sequence belongs to the set of \emph{typical sequences}. The size of this set lies between
\begin{labeq}{eq3.21}
	(1-\epsilon_n)2^{n(H(\alpha^2)-\delta_n)}\leq\#(typ)\leq 2^{n(H(\alpha^2)+\delta_n)}
\end{labeq}
where $H(\alpha^2)=-\alpha^2\log{\alpha^2}-(1-\alpha^2)\log{(1-\alpha^2)}$ is the binary Shannon entropy. Since the set of typical sequences is also orthonormal, $\#(typ)$ is the dimension of the subspace of the original Hilbert space spanned by this set.

 In the limit of large $n$, it should be obvious that by applying the above protocol Alice and Bob have succeeded in obtaining $\ket{S_{k\approx n\alpha^2}}$ with a probability tending to one. Comparing this state with \cref{eq3.4} and noting that it is a normalized sum of all possible typical sequences with equal weights it immediately follows that it is a maximally entangled state of at least $d\approx2^{n(H(\alpha^2)-\delta_n)}$ dimensions which is unitarily equivalent to approximately $n(H(\alpha^2)-\delta_n)$ copies of $\Psi_2^+$, note that to transform this state into a singlet Bob simply has to rotate his subsystem by applying a $\sigma_x\sigma_z$ unitary operation. 
 
 If $\eta$ is the number of singlet states obtained by distillation divided by the number of input states then
\begin{labeq}{eq3.22}
	\eta\leq H(\alpha^2)
\end{labeq}
where equality holds asymptotically. The above protocol is easily generalized  for distillation of any $d_A\times d_B$ pure state, in this case the right hand side of \cref{eq3.22} is given be the Shannon entropy of the state's Schmidt coefficients, this is always less than $\log d$ where d is the Schmidt rank. It is therefore impossible to distill a d-dimensional maximally entangled state from a less entangled pure state. Various distillation protocols also exist for mixed states \cite{Bennett1996b},\cite{Bennett1996a}.

\emph{Entanglement dilution} is the exact opposite of what we have just described. For the case of pure states a simple protocol for diluting a maximally entangled state into a less entangled one is given by the following procedure. Initially Alice and Bob share a maximally entangled state $\Psi^+_d$. Alice then prepares locally a pair of entangled particles in the state $\phi_{\scriptscriptstyle A'A}$ on her side\footnote{She is always allowed to do so since in creating the state no entanglement is being created between her and Bob.} and teleports (see \cref{appb} for more details) one of the particles to Bob this is described by the following process
\begin{labeq}{eq3.23}
	\phi_{\scriptscriptstyle A'A}\otimes\indexket{\Psi^+_d}{AB}\to\indexket{\Psi^+_d}{A'A}\otimes\phi_{\scriptscriptstyle AB}
\end{labeq}
In the end they will end up sharing the desired state $\phi_{\scriptscriptstyle AB}$. This is not an efficient protocol since the entanglement that was needed to create $\phi_{\scriptscriptstyle AB}$ (the d-dimensional maximally entangled state) is more than the entanglement that can be distilled by it. 

To improve efficiency Alice must start with a large number of copies of state $\phi_{\scriptscriptstyle A'A}$. More precisely suppose that she has prepared locally n copies of the state given by \cref{eq3.18} where $B$ is replaced everywhere by $A'$. To teleport each $\ket{S_k}$ she would have to share with Bob a $d_k=\binom{n}{k}$ dimensional maximally entangled state, since this will happen with probability $p_k$ the expected amount of singlets they must share is equal to
\begin{labeq}{eq3.24}
\sum_kp_k\log{d_k}
\end{labeq}
but as was shown earlier in the limit of large $n$, $k\approx n\alpha^2$, $p_k\approx1$ and the maximally entangled state has at most $d_k\approx2^{n(H(\alpha^2)+\delta_n)}$  dimensions. Thus in order for them to obtain the desired state by entanglement dilution they must use approximately $n(H(\alpha^2)+\delta_n)$ singlets. If $\eta'$ is the number of singlets needed for dilution divided by the number of obtained states then
\begin{labeq}{eq3.25}
	\eta'\geq H(\alpha^2)
\end{labeq}
with equality holding in the asymptotic limit. For a general pure state the right hand side is again equal to the Shannon entropy of the state's Schmidt coefficients.

In the limit of an asymptotic number of copies it is easy to see that for any pure entangle state
\begin{labeq}{eq3.26}
	\eta=\eta'=-\sum_i\lambda_i\log\lambda_i
\end{labeq}
so distillation and dilution become mutually reversible processes, the amount of singlets invested in creating the state can always be entirely recovered.

It would seem physically reasonable to assume that every entangled state can be distilled to the singlet form, however this is false. In \cite{Horodecki1998c} it has been proven that a necessary condition for distillability is given by the negativity of the partial transpose. The basic idea behind the proof of this theorem is that if a state is to be distillable then for a certain number of copies a $2\times 2$ subspace of $\rho^{\otimes n}$ will be in a state that closely approximates a singlet, since by \cref{thrm3.1} a $2\times 2$ entangled state has a negative partial transpose then so must the original state.
\begin{cor}\label{cor3.1}
If for any mixed state $\rho\in\mathcal{D_{\mathcal{H}}}$ 
	\begin{labeq}{eq3.27}
		 \rho^{T_X}\geq 0 \qquad X=A,B
	\end{labeq}
then it cannot be distilled.
\end{cor}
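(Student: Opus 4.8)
The plan is to argue by contradiction, exploiting the fact that the positivity of the partial transpose (the PPT property) is stable both under tensor powers and under LOCC, whereas any $2\times 2$ state lying close enough to a maximally entangled state must fail to be PPT.

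First I would record the two stability facts. \emph{Tensor powers:} after the obvious relabelling of the two parties' indices $(\rho^{\otimes n})^{T_X}$ coincides with $(\rho^{T_X})^{\otimes n}$, and a tensor product of positive operators is positive, so \cref{eq3.27} gives $(\rho^{\otimes n})^{T_X}\geq 0$ for every $n$. \emph{LOCC maps:} by the description in \cref{sub3.2} every LOCC operation is built from local Kraus operators of the form $A\otimes B$, interspersed with classical communication which merely selects which local operation is applied next; hence it suffices to check a single Kraus term. For this one uses the identity $\bigl[(A\otimes B)\,\rho\,(A\otimes B)^\dagger\bigr]^{T_A}=(\bar A\otimes B)\,\rho^{T_A}\,(\bar A\otimes B)^\dagger$, where $\bar A$ denotes the entrywise complex conjugate of $A$ in the chosen basis (this is verified directly from \cref{eq3.9}); the right-hand side has the form $M\,\rho^{T_A}\,M^\dagger$ and is therefore positive whenever $\rho^{T_A}\geq 0$. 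Summing over Kraus indices, and over measurement outcomes with the subsequent renormalisation, preserves positivity of the partial transpose. Consequently, if $\rho$ is PPT then so is $\Lambda(\rho^{\otimes n})$ for every $n$ and every $\Lambda\in$ LOCC.

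Next I would invoke the operational meaning of distillability exactly as sketched in the text preceding the corollary: if $\rho$ were distillable there would exist an integer $n$ and an LOCC map $\Lambda$ such that $\sigma:=\Lambda(\rho^{\otimes n})$ is a $2\times 2$ state with singlet fidelity $\braket{\psi_s|\sigma|\psi_s}>1/2$. By the two stability facts $\sigma^{T_X}\geq 0$, so \cref{thrm3.1} forces $\sigma$ to be separable. But a separable two-qubit state has overlap at most $1/2$ with any maximally entangled state, i.e. $\max_{\tau\in\mathcal S}\braket{\psi_s|\tau|\psi_s}=1/2$, contradicting the assumed fidelity $>1/2$. Hence no such $n$ and $\Lambda$ exist and $\rho$ cannot be distilled.

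The main obstacle I expect is not the algebra of partial transposition but the reduction to a single good $2\times 2$ block together with the fidelity threshold: one must justify that a successful distillation procedure may be taken to output, after local filtering and renormalisation on each side, a genuine two-qubit density matrix, and that $\braket{\psi_s|\sigma|\psi_s}>1/2$ is exactly the certificate of entanglement in $2\times 2$ — equivalently that it forces $\sigma^{T_X}\not\geq 0$. Establishing the bound $\max_{\tau\in\mathcal S}\braket{\psi_s|\tau|\psi_s}=1/2$ (for instance by pairing a separable $\tau$ against the partial transpose of the singlet projector, which carries an eigenvalue $-1/2$) is the one genuinely non-routine ingredient; everything else is bookkeeping with the two stability properties above.
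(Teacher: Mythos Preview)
Your argument is correct and is precisely the route the paper indicates. The paper does not give a self-contained proof of the corollary; it only sketches the idea in the paragraph immediately preceding it (a distillable state must, for some $n$, admit a $2\times 2$ subspace of $\rho^{\otimes n}$ close to a singlet, which by \cref{thrm3.1} forces negative partial transpose somewhere, hence in $\rho$ itself) and defers to \cite{Horodecki1998c}. Your proposal fleshes out exactly this sketch with the two standard stability ingredients --- PPT under tensor powers and under LOCC --- and the fidelity threshold $1/2$, which is the content of the Horodecki argument.
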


By combining \cref{thrm3.1} with \cref{cor3.1} it follows that any entangled state of a $2\times 2$ and $2\times 3$ system is distillable. For systems of higher dimension there exist mixed entangled states that have a positive partial transposition \cite{Horodecki1997a} and therefore cannot be distilled, such states are called \emph{bound entangled states}.
\subsection{Entanglement measures.}\label{sub3.4}
Since entanglement is a resource it is desirable to find a way to quantify the amount of entanglement present in a given quantum state. This is possible with the help of an \emph{entanglement measure} which is a real valued function on the set of states $\mathcal{D_\mathcal{H}}$. To help understand the qualitative features of the theory, a discussion of its properties is necessary before a formal definition can be given.

Suppose $E$ is such a measure mapping the set of states of Hilbert space $\mathcal{H}$ onto the real numbers. By definition separable states contain no entanglement so their measure must equal	 zero.
\begin{prop}[Compatibility]\label{prop3.1} 
	\begin{labeq}{eq3.28}
		\forall \sigma\in \mathcal S\quad E(\sigma)=0 
	\end{labeq}
\end{prop}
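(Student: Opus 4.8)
The plan is to show that a separable state can carry no more of whatever quantity $E$ quantifies than a pure product state does, and then to use non-negativity of $E$ to squeeze the value down to zero. Two of the standard requirements for an entanglement measure are needed: that $E$ is non-negative, and that $E$ does not increase (on average) under LOCC operations. Both of these will be among the properties listed in the formal definition that this discussion is building towards, so it is legitimate to lean on them here; with them in hand the argument is very short.

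First I would fix a reference pure product state, say $\ketbra{00}{}$, and observe that $E(\ketbra{00}{})=0$. This is really the normalisation convention for entanglement measures, but if one wishes to derive it, note that every pure product state $\varphi_{\scriptscriptstyle A}\otimes\psi_{\scriptscriptstyle B}$ is obtained from $\ketbra{00}{}$ by a local unitary $U_{\scriptscriptstyle A}\otimes U_{\scriptscriptstyle B}$, which is an LOCC operation whose inverse is again LOCC; hence monotonicity forces $E$ to take one and the same value on all pure product states, and that common value is set to zero.

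Second, given an arbitrary separable $\sigma=\sum_i p_i\ketbra{\varphi_i}{}\otimes\ketbra{\psi_i}{}$ as in \cref{def3.2}, I would exhibit an LOCC protocol that manufactures $\sigma$ starting from $\ketbra{00}{}$: this is exactly the procedure sketched in the footnote to \cref{sub3.2} --- Alice samples the index $i$ with probability $p_i$, transmits it to Bob over the classical channel, each party applies the local channel that resets its system to $\ketbra{\varphi_i}{}$ respectively $\ketbra{\psi_i}{}$, and finally the classical record of $i$ is discarded. The whole map belongs to LOCC, so monotonicity gives $E(\sigma)\le E(\ketbra{00}{})=0$; together with $E(\sigma)\ge 0$ this yields $E(\sigma)=0$, which is \cref{prop3.1}. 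If instead convexity of $E$ is taken as a primitive property rather than full LOCC-monotonicity, the identical conclusion follows even more directly from $E(\sigma)\le\sum_i p_i\,E\!\left(\ketbra{\varphi_i}{}\otimes\ketbra{\psi_i}{}\right)=0$ combined with non-negativity.

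The only genuinely delicate point is the logical status of the statement itself. Depending on which properties are declared primitive in the forthcoming definition, \textbf{Compatibility} is either a true corollary (when LOCC-monotonicity together with non-negativity are the primitives, as above) or essentially a defining stipulation with nothing left to prove. In the former, more informative case the entire content is the remark --- already implicit in \cref{lem3.1} and its proof --- that the separable states form precisely the LOCC-orbit of a single product state, so no nontrivial computation is involved; the ``obstacle'' here is conceptual bookkeeping about the axioms, not a technical difficulty.
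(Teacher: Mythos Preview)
In the paper, \cref{prop3.1} is not a theorem with a proof at all: it is the first item in a list of desiderata that an entanglement measure is \emph{required} to satisfy. The sentence immediately preceding it --- ``By definition separable states contain no entanglement so their measure must equal zero'' --- is the entire justification offered, and the paper later reiterates that Properties~\ref{prop3.1}--\ref{prop3.3} ``should always be satisfied by any candidate entanglement measure''. So the situation is exactly the second alternative you flagged in your last paragraph: Compatibility is a defining stipulation, and there is nothing to prove.

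Your derivation from LOCC-monotonicity plus a normalisation on product states is mathematically sound and is a worthwhile remark about redundancy among the axioms, but it is a different exercise from what the paper does. Note also that the paper never lists non-negativity of $E$ as a separate axiom, so if you wanted to present your argument as an internal consistency check you would still need to import $E\ge 0$ from elsewhere (or, as you say, set $E=0$ on product states by convention, which already concedes part of the conclusion). In short: your mathematics is fine, but the paper treats \cref{prop3.1} as an axiom, so no proof appears and none is expected.
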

When performing a LOCC operation we would like to know what the entanglement of the initial and final states are. The simplest kind of such an operation is a unitary transformation. Recalling that the effect of a unitary operator $U$ is simply to rotate the system's Hilbert space it is natural to demand that during such a rotation the amount of entanglement remains constant.
\begin{prop}\label{prop3.2} For any bilocal Unitary operation $U_A\otimes U_B$
	\begin{labeq}{eq3.29}
		E\((U_A\otimes U_B)\rho( U_A^\dagger\otimes U_B^\dagger)\)=E(\rho)
	\end{labeq}
\end{prop}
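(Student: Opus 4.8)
The plan is to exploit that a bilocal unitary is a \emph{reversible} LOCC operation, and then to invoke the non-increase of entanglement under LOCC — the requirement that will serve as the central axiom in the formal definition of an entanglement measure. First I would note that $U_A\otimes U_B$ is itself a deterministic LOCC operation: it factors as $(U_A\otimes I_B)(I_A\otimes U_B)$, i.e.\ an Alice operation followed by a Bob operation, and needs no classical communication at all. Writing $\sigma=(U_A\otimes U_B)\,\rho\,(U_A^\dagger\otimes U_B^\dagger)$, monotonicity under LOCC then gives $E(\sigma)\le E(\rho)$.

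Next I would observe that the inverse map $V_A\otimes V_B$ with $V_X=U_X^\dagger$ is a bilocal unitary of exactly the same kind, hence also LOCC, and that $\rho=(V_A\otimes V_B)\,\sigma\,(V_A^\dagger\otimes V_B^\dagger)$. Running the same argument backwards yields $E(\rho)\le E(\sigma)$. The two inequalities together give $E(\rho)=E(\sigma)$, which is precisely \cref{eq3.29}.

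For the pure-state case the invariance can also be exhibited directly, without appealing to monotonicity: applying $U_A\otimes U_B$ to the Schmidt form $\varphi_{\scriptscriptstyle AB}=\sum_i\sqrt{\lambda_i}\ket{ii}$ of \cref{eq3.4} gives $\sum_i\sqrt{\lambda_i}\,(U_A\ket{i})\otimes(U_B\ket{i})$, and since $\{U_A\ket{i}\}_i$ and $\{U_B\ket{i}\}_i$ are again orthonormal families this is once more a Schmidt decomposition with the same coefficients $\lambda_i$. Any measure that on pure states is a function of the Schmidt spectrum alone — as all the examples to be introduced are — is then manifestly invariant.

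The one point I would be careful about is the logical status of the ingredient being used: \cref{eq3.29} is less a theorem proved from first principles than a consistency statement, showing that the ``natural demand'' of unitary invariance is in fact \emph{forced} by monotonicity together with the reversibility of bilocal unitaries, rather than being an independent postulate. I would therefore make this dependence explicit, and if the LOCC-monotonicity requirement has not yet been adopted at this stage of the development, I would either list it as a hypothesis of the proposition or defer the argument until it is on the table.
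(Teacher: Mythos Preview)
Your argument is correct, but it is worth pointing out that the paper does not actually prove \cref{prop3.2} at all: in \cref{sub3.4} it is listed as one of the \emph{postulated} properties that a candidate entanglement measure should satisfy, motivated only by the heuristic remark that a unitary ``is simply to rotate the system's Hilbert space'' and so ``it is natural to demand'' invariance. Properties~\ref{prop3.1}--\ref{prop3.5} are treated as a list of desiderata, not as theorems.

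What you have done is different and in fact sharper: you observe that \cref{prop3.2} is not logically independent of the list but is a \emph{consequence} of \cref{prop3.3}, because a bilocal unitary and its inverse are both LOCC. This is the standard modern observation (and the reason many later treatments drop unitary invariance as a separate axiom). Your closing caveat about the logical order is exactly right: in the paper's presentation \cref{prop3.3} comes \emph{after} \cref{prop3.2}, so your derivation is, strictly speaking, forward-referencing; phrasing it as ``\cref{prop3.2} is implied by \cref{prop3.3} and hence redundant once the latter is adopted'' is the cleanest way to handle that. The Schmidt-coefficient remark for pure states is a nice supplementary check and matches how the concrete measures in the paper (e.g.\ $E_{\mathcal S}$) are seen to be invariant in practice.
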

\noindent Since by definition it is impossible to create entanglement by LOCC operations it also follows that
\begin{prop}[Monotonicity under LOCC]\label{prop3.3} 
For any $\Lambda\in LOCC$
	\begin{labeq}{eq3.30}
		E(\rho)\geq E(\Lambda(\rho))
	\end{labeq}
\end{prop}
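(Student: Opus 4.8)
The plan is to read \cref{prop3.3} not as a computation but as the operational cornerstone of the theory, and to argue it from the very structure of LOCC established in \cref{sub3.2}. First I would recall that every $\Lambda\in LOCC$ is, by construction, a finite concatenation of four elementary moves: appending a local ancilla in a fixed product state, a bilocal unitary, a one-sided local generalized measurement whose outcome is broadcast over the classical channel, and discarding a local ancilla. By \cref{lem3.1} each such $\Lambda$ keeps the separable set invariant. Now the broadcasting step is mere shared classical randomness and cannot change $E$; the ancilla steps change nothing, since an uncorrelated local system carries no entanglement and by \cref{prop3.1} has measure zero; and the unitary step is neutral by \cref{prop3.2}. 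Hence the entire burden falls on the measurement step, so that $E$ is LOCC-monotone precisely when it does not increase, \emph{on average over the outcomes}, under a one-sided local measurement — and this is exactly the property one must build into the definition of an entanglement measure, of which \cref{eq3.30} is the packaged, outcome-averaged form.

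Then I would make the argument explicit. Writing $\Lambda(\rho)=\sum_i p_i\rho_i$ as in \cref{eq3.11}, with $\rho_i$ the post-measurement states and $p_i$ their probabilities, one wants $\sum_i p_i E(\rho_i)\le E(\rho)$, and then, since coarse-graining (forgetting which $i$ occurred) should not create entanglement either, $E\!\left(\sum_i p_i\rho_i\right)\le \sum_i p_i E(\rho_i)\le E(\rho)$. The first inequality is the ``non-increasing on average under a local measurement'' requirement; the second is convexity of $E$. Composing these over the finitely many stages of the protocol, with \cref{prop3.2} absorbing the unitaries and \cref{prop3.1} the ancillas, yields \cref{eq3.30}.

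The main obstacle is conceptual rather than technical: for a generic $E$ there is nothing to grind, because the inequality is true only once one has fixed the correct set of defining properties (vanishing on $\mathcal S$, local-unitary invariance, convexity, non-increase on average under local measurements), and \cref{prop3.3} is then a theorem relative to that definition; the delicate point is justifying the averaging step, namely that a party may, without increasing $E$, both perform a local measurement and discard its result. If instead one wants genuine content for an explicit candidate — say the relative entropy of entanglement $E_R(\rho)=\min_{\sigma\in\mathcal S}S(\rho\,\|\,\sigma)$ — the obstacle migrates to the analytic core: one invokes monotonicity of relative entropy under CPTP maps together with $\Lambda(\mathcal S)\subseteq\mathcal S$ to get $E_R(\Lambda(\rho))\le \min_{\sigma\in\mathcal S}S(\Lambda(\rho)\,\|\,\Lambda(\sigma))\le \min_{\sigma\in\mathcal S}S(\rho\,\|\,\sigma)=E_R(\rho)$, and the real work is the data-processing inequality for that particular functional.

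Finally I would note, as a consistency check, that \cref{prop3.2} is subsumed by \cref{prop3.3}: a bilocal unitary $U_A\otimes U_B$ and its inverse $U_A^\dagger\otimes U_B^\dagger$ are both LOCC, so applying \cref{eq3.30} twice gives $E(\rho)\ge E\!\left((U_A\otimes U_B)\rho(U_A^\dagger\otimes U_B^\dagger)\right)\ge E(\rho)$, forcing equality.
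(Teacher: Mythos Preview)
Your proposal is technically sound as a derivation, but it mismatches what the paper actually does. In the paper, \cref{prop3.3} is not a theorem with a proof: it is listed as one of the \emph{defining requirements} an entanglement measure must satisfy. The only justification offered is the single motivating sentence preceding it --- ``Since by definition it is impossible to create entanglement by LOCC operations it also follows that'' --- after which the inequality is simply posited. The paper then remarks that Properties~\ref{prop3.1}--\ref{prop3.3} ``are directly related to the structure of the theory'' and ``should always be satisfied by any candidate entanglement measure.'' There is nothing further to compare your argument against.

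Your route --- decomposing an LOCC protocol into ancilla attachment, bilocal unitaries, local measurements with broadcast outcomes, and partial traces, then invoking non-increase on average plus convexity --- is the standard Vidal-style reduction, and it is correct. But note that relative to the paper's own logical ordering it is mildly circular: you use \cref{prop3.4} (full monotonicity, the average inequality) together with convexity to recover \cref{prop3.3}, whereas the paper presents \cref{prop3.4} as a \emph{stronger} variant of \cref{prop3.3}, not as a more primitive ingredient, and never lists convexity among the required properties at all. So what you have written is a genuine proof that full monotonicity plus convexity implies deterministic LOCC monotonicity --- useful and true --- but the paper simply takes \cref{prop3.3} as an axiom and moves on. Your closing consistency check, that \cref{prop3.2} follows from \cref{prop3.3}, is correct and is a nice observation the paper does not make explicit.
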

\noindent A stronger form of monotonicity is given by demanding that entanglement is non-increasing only on average, this allows the possibility of obtaining an entangled state with a greater amount of entanglement than initially.
\addtocounter{pprop}{2}
\begin{pprop}[Full monotonicity]\label{prop3.4}
If $\Lambda(\rho)=\sum_ip_i\rho_i$ then 
	\begin{labeq}{eq3.31}
		E(\rho)\geq \sum_ip_iE(\rho_i)
	\end{labeq}
\end{pprop}
Properties \ref{prop3.1} to \ref{prop3.3} are directly related to the structure of the theory for this reason they should always be satisfied by any candidate entanglement measure. An ideal measure should also possess the following properties. On the one hand it would seem reasonable to demand that the total entanglement of a composite system equals the sum of its parts.
\begin{prop}[Additivity]\label{prop3.5}
	\begin{labeq}{eq3.32}
		E(\rho^{\otimes n})=nE(\rho)
	\end{labeq}
\end{prop}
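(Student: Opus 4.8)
The plan is to read \cref{prop3.5} not as a consequence of the earlier properties but as an extra requirement whose status must be checked case by case, since additivity genuinely does not follow from \cref{prop3.1,prop3.2,prop3.3}: no LOCC map produces $\rho^{\otimes(n+1)}$ from $\rho^{\otimes n}$ at zero cost, so monotonicity alone places no lower bound on $E(\rho^{\otimes n})$. Accordingly I would first split the asserted equality into the two inequalities $E(\rho^{\otimes n})\le nE(\rho)$ and $E(\rho^{\otimes n})\ge nE(\rho)$. The first is the routine half for any measure defined through an infimum over pure-state decompositions, or through a regularised limit: concatenating $n$ optimal single-copy decompositions yields a valid decomposition of $\rho^{\otimes n}$, and additivity of the functional on product pure states gives the upper bound. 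The second inequality is where the work lies, and it is exactly the point at which one needs a structural feature of the particular $E$ rather than the general axioms.

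For the canonical pure-state measure, namely the von Neumann entropy $S(\rho_X)=-\sum_i\lambda_i\log\lambda_i$ of the reduced state \cref{eq3.5}, I would establish additivity outright. The reduced state of $\varphi^{\otimes n}$ on Alice's side is $\rho_X^{\otimes n}$; its eigenvalues are the products $\lambda_{i_1}\cdots\lambda_{i_n}$, and since $-\sum p\log p$ is additive on tensor products one gets $E(\varphi^{\otimes n})=nS(\rho_X)=nE(\varphi)$ with no optimisation at all. Likewise the distillation and dilution rates $\eta,\eta'$ of \cref{eq3.22,eq3.25} are additive essentially by fiat, being per-copy asymptotic rates. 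So in the pure-state sector, and for the operationally defined measures, \cref{prop3.5} holds unconditionally.

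The honest conclusion of the ``proof'' is therefore that \cref{prop3.5} is a desideratum: it is automatic for the entropy of entanglement and for the asymptotic rate measures, but for a generic mixed-state measure satisfying only \cref{prop3.1,prop3.2,prop3.3} it must be imposed by hand, the obstruction being precisely the lower bound $E(\rho^{\otimes n})\ge nE(\rho)$. Stated structurally, additivity says that the regularisation $E^{\infty}(\rho):=\lim_{n\to\infty}E(\rho^{\otimes n})/n$ returns $E(\rho)$ itself; this is known to be delicate — it fails or is open for several standard measures, such as the entanglement of formation and the relative entropy of entanglement — which is why I expect the main obstacle here to be conceptual, namely that monotonicity under LOCC simply does not see the many-copy behaviour, rather than any particular calculation.
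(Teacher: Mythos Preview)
Your reading is correct and matches the paper: \cref{prop3.5} is not a theorem to be proved but a desideratum listed among the properties ``an ideal measure should also possess,'' and the paper offers no proof of it (indeed it later remarks that additivity of the entanglement of formation is open). Your additional discussion of why additivity holds for the entropy of entanglement and why it cannot follow from \cref{prop3.1,prop3.2,prop3.3} alone is consistent with, and goes somewhat beyond, the paper's treatment.
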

\begin{pprop}[Full additivity]\label{prop3.6}
	\begin{labeq}{eq3.33}
		E(\rho\otimes\sigma)=E(\rho)+E(\sigma)
	\end{labeq}
\end{pprop}
Finally two states that are close to each other should contain approximately equal amounts of entanglement. Ideally an entanglement measure should satisfy the following strong form of continuity
\begin{prop}[Asymptotic Continuity]\label{prop3.5}
For any two states $\rho$, $\sigma$ such that $\lim_{n\to\infty}\norm{\rho^{\otimes n}-\sigma^{\otimes n}}= 0$
	\begin{labeq}{eq3.34}
		\lim_{n\to\infty}\frac{|E(\rho^{\otimes n})-E(\sigma^{\otimes n})|}{n}= 0
	\end{labeq}
\end{prop}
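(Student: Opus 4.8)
First I would point out that the displayed implication, read literally, is nearly vacuous, and so this statement really records a \emph{desideratum} for an entanglement measure rather than a theorem about a given $E$. Indeed, the partial trace is completely positive and trace preserving, hence a contraction for the trace norm, so tracing out a single tensor factor yields $\norm{\rho^{\otimes(n-1)}-\sigma^{\otimes(n-1)}}\le\norm{\rho^{\otimes n}-\sigma^{\otimes n}}$. Thus $a_n:=\norm{\rho^{\otimes n}-\sigma^{\otimes n}}$ is non-decreasing in $n$ and $a_n\ge a_1=\norm{\rho-\sigma}\ge 0$ for every $n$; a non-decreasing sequence with $a_n\to 0$ must satisfy $a_n\le 0$, hence $a_n=0$ for all $n$, so in particular $\rho=\sigma$. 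Then $E(\rho^{\otimes n})=E(\sigma^{\otimes n})$, the quotient is identically $0$, and its limit is $0$. So step one is to record this, and step two is to observe that the property only has teeth in its \emph{sequence} form: states $\rho_n,\sigma_n$ on a Hilbert space of dimension $d_n$ with $\log d_n$ of order $n$ and $\norm{\rho_n-\sigma_n}\to 0$, asking whether $|E(\rho_n)-E(\sigma_n)|/n\to 0$.

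That sequence form does not follow from the earlier properties — uniform continuity on each fixed dimension is easy, but here the dimension grows — so it must be verified measure by measure, and my plan would differ by the measure. For the entropy of entanglement $E(\varphi)=S(\rho_A)$ I would use a Fannes-type estimate (Audenaert's sharp version): with $\varepsilon=\tfrac12\norm{\rho-\sigma}$ and $H$ the binary entropy one has $|S(\rho)-S(\sigma)|\le\varepsilon\log(d-1)+H(\varepsilon)$, obtained by pinning down the extremal pair at fixed trace distance via a majorisation argument and invoking concavity of $S$; applying this to the reduced states of $\rho_n,\sigma_n$ (whose trace distance is controlled by $\norm{\rho_n-\sigma_n}$ by contractivity), dividing by $n$, and letting $n\to\infty$ so that $\varepsilon\to 0$ while $\log d_n$ stays of order $n$, sends both terms to $0$. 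For a convex-roof or relative-entropy measure — say the relative entropy of entanglement, which is additionally bounded by $\log d$ — I would instead run the Alicki--Fannes telescoping argument: build the interpolating state $\omega$ that sits between $\rho$ and $\sigma$ (it is a mixture of $\rho$ with one correction state and simultaneously a mixture of $\sigma$ with another, both corrections carrying weight $\tfrac{\varepsilon}{1+\varepsilon}$), apply convexity and monotonicity of $E$ to the two decompositions of $\omega$, and bundle the remainder into terms of the form $\varepsilon\log d$ and $H(\varepsilon)$, which again vanish after division by $n$.

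The genuinely hard part is the dimension bookkeeping: one must arrange that the error depends on the dimension only through $\log d$ — not $d$ or $\sqrt d$ — since only then does dividing by $n$ annihilate it when $\log d$ is of order $n$. This is exactly where bare uniform continuity of $S$ is not enough and one needs the finer input, namely the explicit maximiser of $|S(\rho)-S(\sigma)|$ at a prescribed trace distance together with concavity (and, for $E_R$, its $\log d$ bound and convexity). For the statement precisely as printed there is, by contrast, no obstacle at all, the hypothesis being empty unless $\rho$ and $\sigma$ coincide.
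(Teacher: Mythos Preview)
Your reading is exactly right, and in fact there is nothing to compare: the paper does not prove this statement at all. It is listed as one of the desiderata (Properties~1--6) that an ideal entanglement measure \emph{should} satisfy, introduced with ``Ideally an entanglement measure should satisfy the following strong form of continuity''. No argument is offered, nor is one intended; the paper simply records the property and later notes for individual measures (entanglement of formation, relative entropy of entanglement) that continuity has been established elsewhere, with references.

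Your further analysis goes well beyond the paper. The observation that the hypothesis as literally printed forces $\rho=\sigma$ --- via contractivity of the partial trace making $n\mapsto\norm{\rho^{\otimes n}-\sigma^{\otimes n}}$ non-decreasing --- is correct and is a genuine critique of the paper's formulation; the intended content is indeed the sequence version you describe, with $\rho_n,\sigma_n$ living on spaces whose log-dimension grows like $n$. Your sketched proofs for the entropy of entanglement (Fannes/Audenaert) and for $E_R$ (Alicki--Fannes interpolation plus the $\log d$ bound) are the standard routes and are sound; the paper does not attempt any of this and simply cites \cite{Nielsen2000} and \cite{Donald1999} for those two cases. So: no gap on your side, and no divergence from the paper's ``proof'' because there is none.
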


For the bipartite case considered here, a multitude of entanglement measures exist. For our purposes we will only mention but a few needed for later discussions. For a short review on the theory of entanglement measures covering also multipartite systems with an extensive bibliography the reader is referred to \cite{Plenio2005}.

\paragraph{\textbf{Entropy of entanglement:}} The first measure to appear in the literature \cite{Bennett1996}, it is defined as
\begin{labeq}{eq3.35}
	E_\mathcal{S}(\rho):=S(\tr{X}{\rho}) \quad X=A,B
\end{labeq}
where $S(\rho)=-\tr{}{\rho\log{\rho}}$ is the von-Neumann entropy and $Tr_{\scriptscriptstyle X}$ the partial trace. It is the only measure known to satisfy \crefrange{prop3.1}{prop3.5}. For pure states it is equal to the Shannon entropy of it's Schmidt coefficients.
\begin{labeq}{eq3.36}
	E_\mathcal{S}(\phi)=-\sum_i\lambda_i\log{\lambda_i}
\end{labeq}
According to \cref{eq3.36} the entropy of entanglement of a singlet is equal to one and for a d-dimensional maximally entangled state equal to $\log d$, a direct application of the method of Lagrange multipliers verifies that this is the maximum value that can be attained justifying the use of the term maximally for these states. By convention this is also demanded of any measure.
\begin{prop}[Normalization]\label{prop6}
	\begin{labeq}{eq3.37}
		E(\Psi_d^+)=\log d
	\end{labeq}
\end{prop}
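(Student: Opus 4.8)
The plan is to read Proposition~\ref{prop6} as a normalization \emph{convention} — on the same footing as fixing the scale of entropy in \cref{lem2.1} — and then to check that it is consistent with, and in fact motivated by, the canonical measure $E_\mathcal{S}$. First I would observe from \cref{eq3.6} that $\Psi_d^+$ is already written in Schmidt form with all Schmidt coefficients equal, $\lambda_i=1/d$ for $i=1,\dots,d$. Substituting into the reduced-state formula \cref{eq3.5} gives $\rho_X=\frac{1}{d}\sum_{i=1}^{d}\ketbra{i}{}=\frac{I_d}{d}$, the maximally mixed state. Hence, by the definition of the entropy of entanglement \cref{eq3.35} (equivalently \cref{eq3.36}),
\[
E_\mathcal{S}(\Psi_d^+)=S\!\left(\tfrac{I_d}{d}\right)=-\tr{}{\tfrac{I_d}{d}\log\tfrac{I_d}{d}}=-\sum_{i=1}^{d}\tfrac{1}{d}\log\tfrac{1}{d}=\log d ,
\]
so at least one measure — indeed the only one known to satisfy all of \crefrange{prop3.1}{prop3.5} — attains exactly the value asserted by the proposition.

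Second, to justify that $\log d$ is the \emph{right} constant (and hence the term ``maximally entangled''), I would carry out the Lagrange-multiplier computation alluded to in the surrounding text: maximizing $-\sum_j\lambda_j\log\lambda_j$ over the simplex $\{\lambda_j\ge 0,\ \sum_j\lambda_j=1\}$ with at most $d$ nonzero entries, the stationarity condition forces all $\lambda_j$ to be equal, hence $\lambda_j=1/d$, and concavity of $x\mapsto-x\log x$ (or Jensen's inequality) upgrades this critical point to the global maximum $\log d$. Therefore among pure states of Schmidt rank $\le d$ the state $\Psi_d^+$ uniquely maximizes $E_\mathcal{S}$, which both reproduces Proposition~\ref{prop6} for $E_\mathcal{S}$ and makes $E(\Psi_d^+)=\log d$ the natural value to impose on an arbitrary entanglement measure $E$.

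I do not expect a genuine obstacle here. Unlike Properties~\ref{prop3.1}--\ref{prop3.3}, this statement is not forced by the LOCC structure of the theory; for a general $E$ it is stipulated rather than derived, and the only thing actually to be \emph{proved} is the realizability/consistency statement above. The sole mild care needed is bookkeeping: checking that \cref{eq3.6} is genuinely in (ordered) Schmidt form so that \cref{eq3.36} applies verbatim, and noting that the value $\log d$, with $d=\min\{d_A,d_B\}$, is independent of the ambient local dimensions $d_A,d_B$.
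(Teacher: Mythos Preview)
Your reading is exactly right: the paper also treats \cref{prop6} as a \emph{convention} rather than a theorem, motivating it by computing $E_\mathcal{S}(\Psi_d^+)=\log d$ from the equal Schmidt coefficients of \cref{eq3.6} and remarking that a Lagrange-multiplier argument shows this is the maximal value attainable. Your proposal reproduces precisely this justification (with a bit more detail on the optimization step), so there is nothing to add or correct.
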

\noindent The unit of entanglement is usually called an \emph{e-bit}.

\paragraph{\textbf{Entanglement of formation:}}
Let $\mathcal{E}=\left\{p_i,\psi_i\right\}$ be an ensemble of pure states such that $\rho=\sum_ip_i\ketbra{\psi_i}{}$. The \emph{entanglement of formation} \cite{Bennett1996a} of $\rho$ is now given by
\begin{labeq}{eq3.38}
	E_F(\rho)=\min_{\mathcal E}\sum_ip_iE_\mathcal{S}(\psi_i)
\end{labeq}
and can be interpreted as the least average amount of singlets needed to prepare $\rho$ using LOCC. It is easy to see that for pure states it is equal to the entropy of entanglement. 

This measure is an example of a \emph{convex roof extension} measure, by construction it satisfies \cref{prop3.1,prop3.2,prop3.3,prop6}, continuity was proven in \cite{Nielsen2000}. It is an open question whether it satisfies additivity or not. \cref{eq3.38} is extremely hard to compute in general because of the optimization procedure involved. For $2\times 2$ systems a closed form exists \cite{Wootters1998}.

\paragraph{\textbf{Relative entropy of entanglement:}} An important measure it is defined as
\begin{labeq}{eq3.39}
	E_R(\rho)=\min_{\sigma\in \mathcal S}S(\rho||\sigma)
\end{labeq}
where $S(\rho||\sigma)=\tr{}{\rho log\rho-\rho log\sigma}$ is the relative entropy \cite{Vedral1997,Vedral1998}. This measure can be thought of as a sort of distance\footnote{Strictly speaking it doesn't satisfy all the properties of a distance function specifically the symmetric property since $S(\rho||\sigma)\neq S(\sigma||\rho)$.} of the entangled state to the set of separable states $\mathcal S$ (see \cref{fig3.1}). The relative entropy of entanglement satisfies \cref{prop3.1,prop3.2,prop3.3,prop6}, continuity was proven in \cite{Donald1999}.
\begin{figure}
\begin{center}
\begin{tikzpicture}
	\draw (0,0) ellipse(4cm and 2cm);
	\draw (2.5,0.5)node{$\mathcal S$} (-1,1.5)node{$\mathcal{D_\mathcal{H}}$};
	\draw (2,0) circle(1cm);
		\begin{scope}[shift={(2,0)}]
		\fill (190:1)node[right]{$\sigma$}circle(1pt) (190:4)node[left]{$\rho$}circle(1pt);
		\draw[decorate,decoration={brace,mirror}] (190:1.1)--(190:3.9) node [midway,yshift=0.4cm] {$E_R(\rho)$};
		\end{scope}
\end{tikzpicture}
\end{center}
\caption{\textbf{The Relative entropy of entanglement}. This is a special case of \emph{distance measures}. It can be interpreted as giving the distance of the entangled state from the convex set of separable states $\mathcal S$.}
\label{fig3.1}
\end{figure}
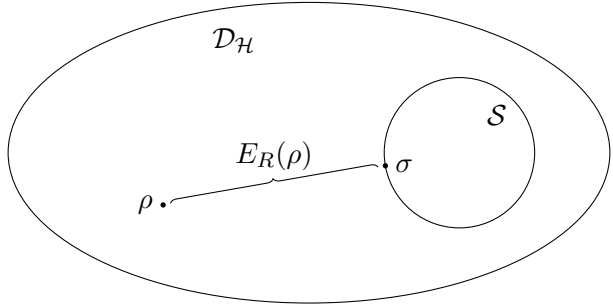
\paragraph{} The last two measures are operational connected to the distillation and dilution protocols.
\paragraph{\textbf{Distillable entanglement:}} Let $\{\Lambda_n\}$ be a sequence of LOCC operations mapping $n$ copies of $\rho$ to $nE-o(n)$ singlets with $\lim_{n\to\infty}o(n)=0$, such a sequence constitutes a distillation protocol. The distillable entanglement is defined by
\begin{labeq}{eq3.41}
	E_D(\rho)=\sup\Set{E|\lim_{n\to \infty}\left[\inf_{\{\Lambda_n\}}\norm{\Lambda_n(\rho^{\otimes n})-P_s^{\otimes nE-o(n)}}\right]\to 0}
\end{labeq}
where $P_s=\ketbra{\psi_s}{}$ is the projection operator for the singlet state. This measure gives the maximum number of singlets per input state that can be asymptotically distilled.

\paragraph{\textbf{Entanglement cost:}} A measure dual to $E_D$ with a similar definition. Let $\{\Lambda_n\}$ be a sequence of LOCC operations mapping $nE+o(n)$ copies of a singlet  to $n$ copies of state $\rho$ with $\lim_{n\to\infty} o(n)=0$, such a sequence constitutes a distillation protocol. The entanglement cost is given by
\begin{labeq}{eq3.42}
	E_C(\rho)=\inf\Set{E|\lim_{n\to \infty}\left[\inf_{\{\Lambda_n\}}\norm{\Lambda_n(P_s^{\otimes nE+o(n)})-\rho^{\otimes n}}\right]\to 0}
\end{labeq}
and gives the minimum amount of singlets per output state that are needed to create it.
\paragraph{}Any subadditive measure $E$ satisfying the continuity property lies between these two \cite{Horodecki2000}
\begin{labeq}{eq3.43}
	E_D(\rho)\leq E(\rho)\leq E_C(\rho)
\end{labeq}
this is extremely useful since such a measure automatically provides an upper bound to the distillable entanglement for which no closed form currently exists.

From any entanglement measure it is possible to construct a regularized version\footnote{Such a measure satisfies by definition the additivity property.}
\begin{labeq}{eq3.44}
	E^\infty(\rho)=\lim_{n\to\infty}\frac{E(\rho^{\otimes n})}{n}
\end{labeq}
An important example of such a measure is the regularized entanglement of formation which is equal to the entanglement cost \cite{Hayden2001}.
\begin{labeq}{eq3.45}
	E_F^\infty(\rho)=E_C(\rho)
\end{labeq}
\section{Entanglement vs Thermodynamics.}\label{chap4}
The non-increase of entanglement under LOCC stated in \cref{prop3.3} bears a strong resemblance to the second law of thermodynamics. This suggests the two theories might be similar in structure. A theory of thermodynamics in which entropy decreases during a process is simply the dual of that presented in \cref{chap2} where $\to$ is replaced everywhere by $\leftarrow$. Physically this is equivalent to reversing the direction of the arrow of time. It is therefore tempting at first sight to make the claim that entanglement is dual to classical thermodynamics and that the thermodynamic and entangled time arrows have opposite orientations. To prove this we have to check whether it is possible to construct an ideal form of entanglement dual to the one given for thermodynamics.

The primitive terms of the theory are easy to identify. A \emph{state} is defined as a positive-semidefinite hermitian operator with unit trace acting on the product Hilbert space $\hilb{A}\otimes\hilb{B}$. \emph{Addition} of states is defined by taking their tensor product. Finally a \emph{natural process} $\to$ between states is mathematically described by a LOCC operation. As with thermodynamics we will only be concerned with \emph{deterministic processes}, references \cite{Lo1997,Vidal1999a,Jonathan1999,Vidal2000,Hardy1999,Vidal1999} adress the problem of stochastic and approximate processes. If state $\rho$ is transformed into $\sigma$ by a \emph{deterministic} LOCC we shall write $\rho\to\sigma$\footnote{There should be no confusion in using the same arrow notation for entanglement as long as one keeps in mind that any Axiom, Theorem, Definition etc. presented here is actually the dual of the corresponding one appearing in \cref{chap2}.}.

To complete the ideal formulation of entanglement based on that of thermodynamics the following dual Axioms must be satisfied.
\begin{axm}\label{axm4.1}
	Addition of states is associative and commutative.
\end{axm}
\begin{axm}\label{axm4.2}
	$\rho\to\rho$
\end{axm}
\begin{axm}\label{axm4.3}
	$\rho\to\sigma\quad\&\quad\sigma\to\tau\implies\rho\to\tau$
\end{axm}
\begin{axm}\label{axm4.4}
	$\rho\to\sigma\iff\rho\otimes\tau\to\sigma\otimes\tau$
\end{axm}
\addtocounter{axm}{1}
\setcounter{subaxm}{0}
\begin{subaxm}\label{axm4.5a}
	$\rho\to\sigma\quad\&\quad\rho\to\tau\implies\sigma\to\tau\quad\text{or}\quad\tau\to\sigma$
\end{subaxm}
\begin{subaxm}\label{axm4.5b}
	$\sigma\to\rho\quad\&\quad\tau\to\rho\implies\sigma\to\tau\quad\text{or}\quad\tau\to\sigma$
\end{subaxm}
\begin{axm}\label{axm4.6}
There exists an internal state e.
\end{axm}
\begin{axm}\label{axm4.7}
	If there exist states $\tau_1$ and $\tau_2$ such that $\rho^{\otimes n}\otimes\tau_1\to\sigma^{\otimes n}\otimes\tau_2$ holds for arbitrarily large positive integers $n$ then $\rho\to\sigma$.
\end{axm}
\begin{axm}\label{axm4.8}
	For any state $\rho$ there exists an equilibrium state $\mu$ such that $\rho\to\mu$. If $\mu$ and $\nu$ are two such states then so is $\mu\otimes\nu$.
\end{axm}

If these Axioms are satisfied by the set of entangled states then according to the discussion at the end of \cref{chap2} there would exist an entropy function on this set which would become totally ordered with respect to adiabatic transitions. The second law for entanglement would thus read 
\begin{labeq}{eq4.1}
	\rho\prec\sigma\quad\text{if and only if}\quad S(\rho)\geq S(\sigma)
\end{labeq}
This would prove to be extremely useful in applications of quantum information theory as it would be possible to determine which processes we are allowed to implement with certainty by simply comparing the entropies of the final and initial states of a possible process. Moreover such a function would serve the role of an entanglement measure since by definition it would satisfy all of the required properties.

In this section we shall discuss the entanglement dynamics of pure states of finite dimensional systems. After introducing the main theorem for deterministic LOCC transitions between states and some corollaries concerning the behavior of maximally entangled states we shall apply these to check whether \crefrange{axm4.1}{axm4.8} are satisfied by the simplest systems possible namely $2\times d$ systems. We shall then repeat the same discussion for systems of any dimension only to demonstrate with the use of counter-examples which Axioms are violated in this case.

\subsection{Entanglement dynamics of pure states.}\label{sec4.1}
A very useful tool in the theory of pure state entanglement is Nielsen's necessary and sufficient conditions for deterministic LOCC transitions \cite{Nielsen1999}.
\begin{thrm}\label{thrm4.1}
If $\phi$ and $\psi$ are two pure states and $\osc\phi=(\alpha_1,\alpha_2,\ldots,\alpha_d)$,  $\osc\psi=(\beta_1,\beta_2,\ldots,\beta_d)$ the vectors of their respective ordered Schmidt coefficients (OSC's) then $\phi\to\psi$ with certainty under LOCC if and only if
	\begin{labeq}{eq4.2}
		\osc\phi\prec\osc\psi
	\end{labeq}
\end{thrm}
 \cref{eq4.2} is known as a majorization relation. Specifically we say that $\osc\phi$ is majorized by $\osc\psi$ if for every $k=1,\ldots,d$ 
\begin{labeq}{eq4.3}
	\sum_{i=1}^k\alpha_i\leq\sum_{i=1}^k\beta_i
\end{labeq}
with equality holding for $k=d$. If $\psi$ has less Schmidt coefficients than $\phi$ we simply add enough coefficients with zero value until their Schmidt ranks become equal.

\cref{thrm4.1} induces an ordering on the set of pure states as well as an equivalence relation, two states $\phi$ and $\psi$ are said to be \emph{comparable} if either $\phi\to\psi$ or $\psi\to\phi$ and \emph{incomparable} otherwise, furthermore if both have the same Schmidt coefficients then we say they are \emph{equivalent} in this case we write $\phi\sim\psi$. By \cref{thrm4.1} if $\phi$ and $\psi$ are two equivalent states then $\phi\to\psi$ and $\psi\to\phi$ both hold.	

The following corollaries dictate the dynamics of maximally entangled states under LOCC.
\begin{cor}\label{cor4.1}
A $d$-dimensional maximally entangled state can always be transformed under LOCC into a $d'$-dimensional maximally entangled state if and only if
	\begin{labeq}{eq4.4}
		d \geq\ d'
	\end{labeq}	
\end{cor}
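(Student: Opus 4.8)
The plan is to apply Nielsen's theorem (\cref{thrm4.1}) directly, reducing the claim to a purely combinatorial statement about majorization between the uniform vectors of the two maximally entangled states. Write $\Psi_d^+$ in its ordered Schmidt decomposition: its OSC vector is $u_d=(1/d,1/d,\ldots,1/d)$ with $d$ entries, and similarly $\osc{\Psi_{d'}^+}=u_{d'}=(1/d',\ldots,1/d')$ with $d'$ entries. Since Nielsen's theorem requires the two vectors to have the same length, I would pad the shorter one with zeros; the direction of the claimed transformation ($\Psi_d^+\to\Psi_{d'}^+$) matters here, so the first thing to pin down is which state needs padding. If $d\geq d'$, then $\Psi_{d'}^+$ has fewer Schmidt coefficients, so we extend $u_{d'}$ by appending $d-d'$ zeros to get a $d$-vector.

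Next I would verify the majorization relation $u_d\prec u_{d'}$ (in the padded sense) when $d\geq d'$. For $k=1,\ldots,d'$ the partial sums are $\sum_{i=1}^k (u_d)_i = k/d$ and $\sum_{i=1}^k (u_{d'})_i = k/d'$; since $d\geq d'$ we have $k/d\leq k/d'$, so the inequality $\sum_{i=1}^k\alpha_i\leq\sum_{i=1}^k\beta_i$ from \cref{eq4.3} holds. For $k=d',\ldots,d$ the partial sum of the padded $u_{d'}$ stays at $1$, while the partial sum of $u_d$ is $k/d\leq 1$, so the inequality continues to hold, with equality at $k=d$. Hence $u_d\prec u_{d'}$, and by \cref{thrm4.1} the transformation $\Psi_d^+\to\Psi_{d'}^+$ is possible under LOCC. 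For the converse, I would argue contrapositively: if $d<d'$, then at $k=1$ we would need $1/d\leq 1/d'$, which fails, so the majorization relation does not hold, and by the ``only if'' direction of Nielsen's theorem no deterministic LOCC transformation $\Psi_d^+\to\Psi_{d'}^+$ exists. This gives the biconditional $\Psi_d^+\to\Psi_{d'}^+ \iff d\geq d'$.

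There is essentially no hard step here — the entire proof is a one-line majorization check once Nielsen's theorem is invoked. The only mild subtlety, and the thing I would be careful to state explicitly, is the zero-padding convention for comparing Schmidt vectors of different ranks (already flagged in the text right after \cref{thrm4.1}), and making sure the inequality is checked on the full range $k=1,\ldots,d$ rather than just up to $d'$, so that the ``equality at $k=d$'' requirement of majorization is seen to hold. One could also phrase the intuition without computation: $\Psi_d^+$ is unitarily equivalent to $\log d$ e-bits and a $d$-dimensional maximally entangled state contains more entanglement than a $d'$-dimensional one exactly when $d\geq d'$, so monotonicity under LOCC (\cref{prop3.3}) immediately rules out the $d<d'$ direction — but to get the positive direction (sufficiency) one still needs Nielsen's theorem or an explicit protocol, so I would present the majorization argument as the clean self-contained proof.
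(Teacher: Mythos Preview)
Your proposal is correct and takes essentially the same approach as the paper, which simply states that the result follows immediately from \cref{thrm4.1}; you have merely spelled out the majorization check and the zero-padding convention that the paper leaves implicit.
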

\begin{proof}
	The proof follows immediately from \cref{thrm4.1}.
\end{proof}
\begin{cor}\label{cor4.2}
	Any pure state $\phi$ of Schmidt rank $d$ can always be obtained under LOCC by a d-dimensional maximally entangled state.
\end{cor}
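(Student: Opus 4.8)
The plan is to invoke Nielsen's theorem (\cref{thrm4.1}) and reduce the statement to a standard majorization fact: the uniform probability vector is majorized by every probability vector of the same length. Concretely, the $d$-dimensional maximally entangled state $\Psi_d^+$ of \cref{eq3.6} has ordered Schmidt coefficients $\osc{\Psi_d^+}=(\tfrac1d,\tfrac1d,\ldots,\tfrac1d)$, while $\phi$, having Schmidt rank $d$, has $\osc\phi=(\alpha_1,\ldots,\alpha_d)$ with $\alpha_1\geq\alpha_2\geq\cdots\geq\alpha_d>0$ and $\sum_i\alpha_i=1$. By \cref{thrm4.1}, $\Psi_d^+\to\phi$ under LOCC if and only if $\osc{\Psi_d^+}\prec\osc\phi$, so it suffices to verify the majorization relation \cref{eq4.3} with $\osc{\Psi_d^+}$ in the role of the majorized vector.

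The key step is then to check that for every $k=1,\ldots,d$,
\begin{labeq}{eq-cor42}
	\sum_{i=1}^k\frac1d=\frac{k}{d}\;\leq\;\sum_{i=1}^k\alpha_i,
\end{labeq}
with equality at $k=d$. The equality at $k=d$ is immediate from normalization of both vectors. For $k<d$ I would argue that, since the $\alpha_i$ are listed in decreasing order, the average of the largest $k$ of them is at least the average of all $d$ of them, i.e. $\frac1k\sum_{i=1}^k\alpha_i\geq\frac1d\sum_{i=1}^d\alpha_i=\frac1d$, which rearranges to \cref{eq-cor42}. (If one prefers to avoid the Schmidt-rank-exactly-$d$ assumption, one pads $\osc\phi$ with zeros as described after \cref{eq4.3}; the same inequality goes through verbatim.)

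Having established $\osc{\Psi_d^+}\prec\osc\phi$, \cref{thrm4.1} yields $\Psi_d^+\to\phi$ with certainty under LOCC, which is exactly the claim. I do not expect any real obstacle here: the only content is the elementary observation that the uniform distribution is the minimum of the majorization order on $d$-element probability vectors, and the rest is a direct citation of Nielsen's theorem. The one point worth stating carefully is the decreasing-order normalization of the Schmidt coefficients, so that the partial-sum inequality \cref{eq-cor42} is applied to genuinely \emph{ordered} vectors, as \cref{thrm4.1} requires.
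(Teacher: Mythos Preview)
Your proposal is correct and follows essentially the same route as the paper: both invoke \cref{thrm4.1} and reduce the claim to the inequality $\tfrac{k}{d}\leq\sum_{i=1}^k\alpha_i$, which the paper states as $k(\alpha_1+\cdots+\alpha_d)\leq d(\alpha_1+\cdots+\alpha_k)$ and you phrase as the average of the largest $k$ ordered coefficients being at least the overall average $\tfrac1d$. The only cosmetic difference is that you spell out the averaging argument and the $k=d$ equality case, whereas the paper leaves these as ``easily verified''.
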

\begin{proof}
	Suppose that $\alpha_i$, $i=1,2\ldots d$ are the ordered Schmidt coefficients of $\phi$. It can easily be verified that $k(\alpha_1+\alpha_2+\ldots+\alpha_d)\leq d(\alpha_1+\alpha_2+\ldots+\alpha_k)$ which holds for every $1 \leq k\leq d$, since $\sum_i\alpha_i=1$ this can be rewritten as 
\begin{labeq}{eq4.5}
	\frac{k}{d}\leq\alpha_1+\alpha_2+\ldots+\alpha_k
\end{labeq}
 but the left hand side of \cref{eq4.5} is simply the sum of the first $k$ Schmidt coefficients of a d-dimensional maximally entangled state thus proving the corollary.
\end{proof}
\begin{cor}\label{cor4.3}
Given any pure state $\phi$ it is possible to obtain a d-dimensional maximally entangled state using LOCC if and only if
	\begin{labeq}{eq4.6}
		d\leq\alpha_1^{-1}
	\end{labeq}
where $\alpha_1$ is the state's largest Schmidt coefficient.
\end{cor}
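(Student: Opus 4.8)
The plan is to reduce the statement to a single majorization inequality via Nielsen's theorem. First I would write $\osc\phi = (\alpha_1, \alpha_2, \ldots, \alpha_r)$ for the vector of ordered Schmidt coefficients of $\phi$, where $r$ is its Schmidt rank, so that $\alpha_1 \geq \alpha_2 \geq \cdots \geq \alpha_r > 0$ and $\sum_{i=1}^r \alpha_i = 1$, and observe that the $d$-dimensional maximally entangled state $\Psi_d^+$ has ordered Schmidt coefficients $(1/d, 1/d, \ldots, 1/d)$, a vector of $d$ equal entries. By \cref{thrm4.1}, $\phi \to \Psi_d^+$ under LOCC if and only if $\osc\phi$ is majorized by this vector, where, following the convention adopted just after that theorem, the shorter of the two is padded with zeros so that both have length $\max\{r,d\}$.

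Next I would unpack the majorization relation using \cref{eq4.3}. For $1 \leq k \leq d$ it demands $\sum_{i=1}^k \alpha_i \leq k/d$, while for $d < k \leq \max\{r,d\}$ it demands only $\sum_{i=1}^k \alpha_i \leq 1$, which holds automatically since the Schmidt coefficients sum to one (and the required equality at $k = \max\{r,d\}$ is automatic for the same reason). The key point is that every nontrivial inequality follows from the one at $k = 1$: because the coefficients are listed in non-increasing order, $\sum_{i=1}^k \alpha_i \leq k\,\alpha_1$ for every $k$ (padding with zeros can only help this bound), so $\alpha_1 \leq 1/d$ immediately yields $\sum_{i=1}^k \alpha_i \leq k/d$ for all $k \leq d$. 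Conversely $\alpha_1 \leq 1/d$ is exactly the $k = 1$ instance, hence necessary. Therefore the majorization holds precisely when $\alpha_1 \leq 1/d$, i.e. $d \leq \alpha_1^{-1}$, which is \cref{eq4.6}.

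I do not expect a genuine obstacle here: this is essentially a direct corollary of \cref{thrm4.1}, and the only step that needs a moment of care is the bookkeeping with the zero-padding together with the observation that all the majorization inequalities collapse to the single constraint on $\alpha_1$. As a consistency check with \cref{cor4.1}, I would also remark that $\alpha_1 \leq 1/d$ forces $r \geq d$, since $1 = \sum_{i=1}^r \alpha_i \leq r\,\alpha_1 \leq r/d$; so no separate hypothesis on the Schmidt rank of $\phi$ is needed.
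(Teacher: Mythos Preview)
Your proposal is correct and follows essentially the same route as the paper: both directions are obtained from \cref{thrm4.1} together with the observation that, because the Schmidt coefficients are ordered, $\sum_{i=1}^{k}\alpha_i\le k\alpha_1$, so the whole family of majorization inequalities collapses to the single constraint $\alpha_1\le 1/d$. The only cosmetic difference is that the paper first invokes \cref{cor4.2} to conclude $d'\ge d$ before splitting into the ranges $1\le k\le d$ and $d<k\le d'$, whereas you handle the padding uniformly and derive $r\ge d$ afterwards as a consistency remark; the underlying argument is the same.
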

\begin{proof}
If  $\phi\to\Psi_d^+$ holds then $\alpha_1\leq d^{-1}$ follows immediately.

Suppose now that $d'$ is the Schmidt rank of $\phi$ and $d\leq\alpha_1^{-1}$. According to \cref{cor4.2} $\alpha_1^{-1}\leq d'$ so $d'\geq d$ follows. Since $\alpha_1\geq\alpha_2\ldots\geq\alpha_{d'}$	  
	\begin{labeq}{eq4.7}
		\sum_{i=1}^k\alpha_i\leq 
			\begin{cases}
				k\alpha_1\leq\frac{k}{d}&\text{for $1\leq k\leq d$}\\
				1&\text{for $d+1\leq k\leq d'$}
			\end{cases}
	\end{labeq}
so according to \cref{thrm4.1} $\phi\to\Psi_d^+$.
\end{proof}
\subsubsection{States of $2\times d$ systems.}\label{sub4.1.1}
The set of pure states of $2\times d$ systems becomes totally ordered under the ordering induced by the majorization relation. This is easy to show by noting that any such state has only two Schmidt coefficients, since their sum is always equal to one we are left with only one independent parameter. Taking this parameter to be the largest Schmidt coefficient of the state, direct application of \cref{thrm4.1} immediately implies that it is always possible to determine, given any pair of states, which one can be transformed to the other under LOCC by simply comparing their first OSC. Furthermore it can be proven that \crefrange{axm4.1}{axm4.8} are satisfied since by the same reasoning most of them simply reduce to statements between real numbers which are trivially satisfied. 

\cref{axm4.1} holds in general for any system since the vector of ordered Schmidt coefficients doesn't depend on the order in which states are added while \cref{axm4.2} is trivially satisfied. We continue by proving that the remaining Axioms are also true.
\begin{proof}[\cref{axm4.3}]
Suppose $\alpha_1$, $\beta_1$, $\gamma_1$ are the first OSC's of $\phi_1$, $\psi$ and $\phi_2$ respectively then $\phi_1\to\psi$ is equivalent to $\alpha_1\leq\beta_1$ and $\psi\to\phi_2$ equivalent to $\beta_1\leq\gamma_1$ it follows that $\alpha_1\leq\gamma_1$ or $\phi_1\to\phi_2$.
\end{proof}
\begin{proof}[\cref{axm4.4}] Suppose $(\alpha_1,\alpha_2)$, $(\alpha'_1,\alpha'_2)$ and $(\beta_1,\beta_2)$ are the OSC's of $\phi_1$, $\phi_2$ and $\psi$ respectively. The largest and smallest Schmidt coefficients of $\phi_1\otimes\psi$ are given by $\alpha_1\beta_1$ and $\alpha_2\beta_2$ and for $\phi_2\otimes\psi$ by $\alpha'_1\beta_1$ and $\alpha'_2\beta_2$ respectively. If $\phi_1\otimes\psi\to\phi_2\otimes\psi$ then according to \cref{thrm4.1} $\alpha_1\beta_1\leq\alpha'_1\beta_1$ from which $\alpha_1\leq\alpha_1'$ so $\phi_1\to\phi_2$ follows from the discussion in the beginning of this section. 

To prove the reverse statement we note that the inequalities in \cref{eq4.3} are trivially satisfied for $k=1$ ($\alpha_1\beta_1\leq\alpha'_1\beta_1$), $k=3$ ($1-\alpha_2\beta_2\leq1-\alpha'_2\beta_2$) and $k=4$ ($1=1$) so only the $k=2$ inequality needs to be checked, the following cases must now be considered.
	\begin{asparaenum}[i)]
		\item{$\alpha_1\beta_2\geq\alpha_2\beta_1$:} Since $\phi_1\to\phi_2$ $\alpha_1\leq\alpha'_1$ and $\alpha'_2\leq\alpha_2$ must hold, it is easy then to show that $\alpha'_1\beta_2\geq\alpha'_2\beta_1$ so $\alpha_1\beta_1+\alpha_1\beta_2\leq\alpha'_1\beta_1+\alpha'_1\beta_2$.
		\item{$\alpha_2\beta_1\geq\alpha_1\beta_2\quad\&\quad\alpha'_2\beta_1\geq\alpha'_1\beta_2$:} In this case $\alpha_1\beta_1+\alpha_2\beta_1=\alpha'_1\beta_1+\alpha'_2\beta_1$.
		\item{$\alpha_2\beta_1\geq\alpha_1\beta_2\quad\&\quad\alpha'_1\beta_2\geq\alpha'_2\beta_1$:} Adding $\alpha'_1\beta_1$ on both sides of the second inequality results in $\beta_1\leq\alpha'_1$ but $\beta_1=\alpha_1\beta_1+\alpha_2\beta_1$ and $\alpha'_1=\alpha'_1\beta_1+\alpha'_1\beta_2$.
	\end{asparaenum}
	
In either case $\osc{\phi_1\otimes\psi}\prec\osc{\phi_2\otimes\psi}$ so ${\phi_1\otimes\psi\to\phi_2\otimes\psi}$ follows.\qedhere
\end{proof}
\begin{proof}[\cref{axm4.5a}]
	Suppose $\alpha_1$, $\alpha'_1$ and $\beta_1$ are the highest OSC's of $\phi_1$, $\phi_2$ and $\psi$ respectively then $\psi\to\phi_1\implies\beta_1\leq\alpha_1$ and $\psi\to\phi_2\implies\beta_1\leq\alpha'_1$ so either $\alpha_1\leq\alpha'_1$ or $\alpha'_1\leq\alpha_1$ holds which implies that either $\phi_1\to\phi_2$ or $\phi_2\to\phi_1$ respectively. \cref{axm4.5b} is proven similarly.
\end{proof}
\begin{proof}[\cref{axm4.6}]
The internal state in this case can be chosen to be the 2-dimensional maximally entangled state $\Psi_2^+$. To prove this suppose $\alpha_1$ and $x_1$ are the highest OSC's of $\phi$ and $\xi$ respectively then by \cref{cor4.2} there always exist positive integers $n$ and $m$ with $nm\geq n+1$ such that $\Psi_2^{+\otimes nm}\to\phi^{\otimes n}\otimes\xi$ and similarly by \cref{cor4.3} positive integers $n,m$ with $\alpha_1^nx_1\leq2^{-nm}$ such that $\phi^{\otimes n}\otimes\xi\to\Psi_2^{+\otimes nm}$ 
\end{proof}
\begin{proof}[\cref{axm4.7}]
Suppose that $\phi_1^{\otimes n}\otimes\eta\to\phi_2^{\otimes n}\otimes\chi$ holds for arbitrarily large positive integers $n$. If $\alpha_1$, $\alpha'_1$, $\eta_1$, $\chi_1$ are the largest OSC's of $\phi_1$, $\phi_2$, $\eta$ and $\chi$ respectively then by \cref{thrm4.1} $\alpha_1^n\eta_1\leq(\alpha'_1)^n\chi_1$ taking the logarithm of each side and dividing by $n$ results in $\log\alpha_1+\frac{1}{n}\log\eta_1\leq\log\alpha'_1+\frac{1}{n}\log\chi_1$ which holds for arbitrarily large values of $n$, taking the limit $n\to\infty$ $\alpha_1\leq\alpha'_1$ follows and $\phi_1\to\phi_2$.
\end{proof}
\begin{proof}[\cref{axm4.8}]
It is natural to choose the set of separable states $\mathcal S$ as the equilibrium states of the theory, for pure states this set consists of product states. The tensor product of two product states is also a product state.
\end{proof}

At this point it would seem that for the special case of $2\times d$ systems the theory of entanglement manipulations under deterministic LOCC is formally equivalent to the classical theory of thermodynamics since the primitive terms of both theories satisfy the same Axioms. If that were the case then the existence of a positive non-increasing entropy function of state would automatically be guaranteed and could be defined as giving the amount of entanglement present in that state. The first step in quantifying it would be to choose an appropriate reference state using it's entanglement as the unit of measure, for our purposes the 2-dimensional maximally entangled state $\Psi_2^+$ could be given that role. The next step would then be to define for any state $\phi$ the following sets 
\begin{labeq}{eq4.8}
	U(\phi):=\Set{\frac{m}{n}|\Psi_2^{+\otimes m}\to\phi^{\otimes n}}
\end{labeq}
\begin{labeq}{eq4.9}
	L(\phi):=\Set{\frac{m}{n}|\phi^{\otimes n}\to\Psi_2^{+\otimes m}}
\end{labeq}
and then define the entanglement of $\phi$ as the Dedekind cut of these two sets. The existence of such a unique number would imply that there exists a reversible process between multiple copies of $\phi$ and $\Psi_2^+$. Unfortunately this is incorrect.

Evidence to the contrary is given by the proof that no matter what reference state we choose no such reversible processes exist apart from trivial cases.
\begin{thrm}\label{thrm4.2}
There exist no positive integers $n$ and $m$ such that for any two $2\times d$ dimensional states $\phi$ and $\psi$
	\begin{labeq}{eq4.10}
		\phi^{\otimes n}\sim\psi^{\otimes m}
	\end{labeq}
\end{thrm}
\begin{proof}
Let $(\alpha_1,\alpha_2)$ and $(\alpha'_1,\alpha'_2)$ be the OSC's of $\phi$ and $\psi$ respectively. If $\alpha_1=\alpha'_1$ then $\phi$ and $\psi$ are the same state and \cref{eq4.10} holds for $m=n$ which according to \cref{axm4.2} is trivial. Assume then without loss of generality that $\alpha_1<\alpha'_1$. According to \cref{thrm4.1} $\alpha_1^n={\alpha'_1}^m$ so $\frac{m}{n}=\frac{\log\alpha_1}{\log{\alpha'_1}}>1$ on the other hand $\alpha_2^n={\alpha'_2}^m$ so $\frac{m}{n}=\frac{\log\alpha_2}{\log{\alpha'_2}}<1$ resulting in a contradiction.
\end{proof}
\begin{cor}\label{cor4.1}
The set of all $2\times d$ dimensional pure states and their possible tensor products is partially ordered.
\end{cor}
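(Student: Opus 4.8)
The plan is to establish two things: that the LOCC relation $\to$ really is a partial order on this enlarged set, and — the substantive point in view of the preceding discussion — that it is \emph{not} a total order. The first part is immediate: $\to$ is reflexive and transitive by \cref{axm4.2,axm4.3}, and it is antisymmetric up to the equivalence $\sim$ of \cref{thrm4.1}, since $\phi\to\psi$ together with $\psi\to\phi$ forces $\osc\phi\prec\osc\psi$ and $\osc\psi\prec\osc\phi$, and a pair of majorization relations in both directions equates the partial sums of the sorted Schmidt vectors, hence $\osc\phi=\osc\psi$, i.e. $\phi\sim\psi$. So $\to$ descends to a partial order on equivalence classes.

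The real work is showing the order fails to be total, and here I would follow the narrative of the section and argue by contradiction. If the set of all $2\times d$ pure states together with their tensor products were totally ordered, then — after rechecking that \crefrange{axm4.1}{axm4.8} still hold on this enlarged set (the majorization-based verifications of \cref{sub4.1.1} adapt once total order is granted) — the Giles machinery would apply verbatim: by \cref{thrm2.3,pthrm2.2,thrm7} there would exist, for every state $\phi$ and the reference $\Psi_2^+$, positive integers $n,m$ with $\phi^{\otimes n}\sim(\Psi_2^+)^{\otimes m}$, directly contradicting \cref{thrm4.2}. Hence the set cannot be totally ordered, and being a preorder that is antisymmetric modulo $\sim$, it is (properly) partially ordered. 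Alternatively, one can give an explicit witness: compare the four-or-more-dimensional ordered Schmidt vectors of two tensor products such as $\phi^{\otimes a}\otimes\psi^{\otimes b}$ and $\phi^{\otimes a'}\otimes\psi^{\otimes b'}$ built from $2\times2$ states — because such products generically have more than two Schmidt coefficients, the one-parameter argument of \cref{sub4.1.1} breaks down and one can arrange the interior partial sums so that neither vector $\prec$-majorizes the other, whence by \cref{thrm4.1} the two states are incomparable and \cref{axm4.5a,axm4.5b} fail.

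I expect the main obstacle to be making the contradiction route airtight: one must be careful that the only axiom which can fail for the enlarged set is the pair \cref{axm4.5a,axm4.5b} (equivalently, total order), so that \cref{thrm4.2} is genuinely pinpointing the loss of total order rather than some other defect — this is, in essence, the content of the corollary. In the explicit-witness route the obstacle is instead purely computational bookkeeping: choosing the Schmidt parameters and exponents, re-sorting the product coefficients in decreasing order, and verifying that the majorization inequality is violated in both directions at some interior index.
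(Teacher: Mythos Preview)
Your contradiction route (a) has a real gap. To invoke Giles's reversibility result (the analogue of \cref{thrm7}) you need \emph{all} of \crefrange{axm4.1}{axm4.8} on the enlarged set, not just total order. You assert that ``the majorization-based verifications of \cref{sub4.1.1} adapt once total order is granted'', but those proofs exploited the one-parameter structure of two-coefficient Schmidt vectors, not total order per se. In particular \cref{axm4.4} (no catalysis) and \cref{axm4.7} are exactly the axioms the paper goes on to show \emph{fail} for higher-rank states in \cref{sub4.1.2}; tensor products of $2\times d$ states have $2^k$ Schmidt coefficients, so you are squarely in that regime and cannot simply import the two-coefficient arguments. Without \cref{axm4.4,axm4.7} the Giles machinery does not run, and the contradiction does not close. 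You correctly flag this as ``the main obstacle'', but you do not resolve it, and resolving it would be harder than proving the corollary directly.

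The paper's proof sidesteps this entirely. It works forward from \cref{thrm4.2} rather than backward through Giles: since no reversible process $\phi^{\otimes n}\sim\psi^{\otimes m}$ exists, the sets $L(\phi|\psi)$ and $U(\phi|\psi)$ are disjoint, and (using only transitivity and the trivial LOCC of discarding copies) every element of $L$ lies strictly below every element of $U$. One then picks a rational $m/n$ in the gap; by construction $m/n\notin L$ gives $\phi^{\otimes n}\not\to\psi^{\otimes m}$ and $m/n\notin U$ gives $\psi^{\otimes m}\not\to\phi^{\otimes n}$, so the pair is incomparable. This needs nothing beyond \cref{thrm4.2}, \cref{axm4.3}, and the monotonicity of copy-discarding --- no appeal to \cref{axm4.4,axm4.7} at all. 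Your route (b) would also reach the goal, but the paper's $L$/$U$ gap argument is both shorter and conceptually cleaner than either of your proposed routes.
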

\begin{proof}
Suppose $U(\phi|\psi)$ and $L(\phi|\psi)$ are the same sets as those in Equations \eqref{eq4.8} and \eqref{eq4.9} where $\Psi_2^+$ is replaced by $\psi$. According to \cref{thrm4.2} ${\max L(\phi|\psi)<\min U(\phi|\psi)}$. Since it is always possible to find positive integers $n$ and $m$ such that $\max L(\phi|\psi)<\frac{m}{n}<\min U(\phi|\psi)$ this implies that $\psi^{\otimes m}\not\to\phi^{\otimes n}$ and $\phi^{\otimes n}\not\to\psi^{\otimes m}$ making $\phi^{\otimes n}$ and $\psi^{\otimes m}$ incomparable.
\end{proof}
Notice that introducing adiabatic processes by considering transitions between states which are the tensor product of an entangled state and a disentangled one cannot restore the total ordering as was possible for the classical theory of thermodynamics since such tensor products have no effect on the state's ordered Schmidt coefficients\footnote{Each Schmidt coefficient is simply multiplied by one.}.

The controversy can be clarified by noting that \crefrange{axm4.1}{axm4.8} must hold for tensor the product of any  $2\times d$ dimensional entangled states and not only for single copies of such states. Since $n$ tensor products of $2\times d$ states can simply be interpreted as a $2^n\times d^n$ entangled pure state it suffices to check whether single copies of higher dimensional states satisfy the Axioms.
\subsubsection{States of any dimension.}\label{sub4.1.2}
As was shown in \cref{cor4.1} the set of $2\times d$ pure entangled states is only partially ordered. This is also true for the set of entangled pure states. It is easy to construct many examples where the conditions in \cref{thrm4.1} are violated so neither state can be transformed deterministically into the other under LOCC. One such example is given by the following pair of incomparable states
\begin{align}
	\psi_1&=\sqrt{0.4}\ket{00}+\sqrt{0.4}\ket{11}+\sqrt{0.1}\ket{22}+\sqrt{0.1}\ket{33}\label{eq4.11}\\
	\psi_2&=\sqrt{0.5}\ket{00}+\sqrt{0.25}\ket{11}+\sqrt{0.25}\ket{22}\label{eq4.12}
\end{align}
In \cite{Jonathan1999a} it was shown that there exists a state $\eta=\sqrt{0.6}\ket{00}+\sqrt{0.4}\ket{11}$ such that 
\begin{labeq}{eq4.13}
	\psi_1\otimes\eta\to\psi_2\otimes\eta
\end{labeq}
is now possible. This \emph{catalytic process} violates \cref{axm4.4} for now \cref{eq4.13} does not necessarily imply that $\psi_1\to\psi_2$. A possible solution might be to slightly modify the definition of accessibility in a way similar to \cref{def2.1} by demanding that two states are comparable if a catalytic process between them exists. Even with such an extension the ordering remains partial. To see this suppose $\phi$ and $\psi$ are two pure states of equal Schmidt rank such that $\alpha_1\leq\beta_1$ and $\alpha_d\leq\beta_d$ where $\alpha_1,\alpha_d$ are the first and last OSC's of $\phi$ and $\beta_1,\beta_d$ the corresponding coefficients for $\psi$. No matter which state is chosen as a catalyst these inequalities will remain unchanged, there is therefore no catalytic process between $\phi$ and $\psi$.

\cref{axm4.5a,axm4.5b} are also invalid according to the following theorem.
\begin{thrm}\label{thrm4.3}
For any two incomparable states $\phi_1$, $\phi_2$ there always exist states $\psi$ and $\psi'$ such that
\begin{labeq}{eq4.14}
	\psi\to\phi_1\quad\&\quad\psi\to\phi_2
\end{labeq}
and
\begin{labeq}{eq4.15}
	\phi_1\to\psi'\quad\&\quad\phi_2\to\psi'
\end{labeq}
\end{thrm}
\begin{proof}
Let $(\alpha_i)_{i=1}^{d_1}$, $(\beta_i)_{i=1}^{d_2}$  be the OSC's of $\phi_1$ and $\phi_2$ with Schmidt ranks $d_1$ and $d_2$ respectively. It is always possible to construct from these states $d\geq\max(d_1,d_2)$ ordered Schmidt coefficients $(\gamma_i)_{i=1}^d$ satisfying 
\begin{labeq}{eq4.16}
	\sum_{i=1}^k\gamma_i\leq\min{\(\sum_{i=1}^k\alpha_i,\sum_{i=1}^k\beta_i\)}
\end{labeq}
for every $1\leq k\leq d$ and $d'\leq\min{\(d_1,d_2\)}$ coefficients $(\delta_i)_{i=1}^{d'}$ satisfying
\begin{labeq}{eq4.17}
\max{\(\sum_{i=1}^k\alpha_i,\sum_{i=1}^k\beta_i\)}\leq\sum_{i=1}^k\delta_i
\end{labeq}
for every $1\leq k\leq d'$. If $\psi$ and $\psi'$ are the states with OSC's $(\gamma_i)$ and $(\delta_i)$ respectively then \cref{eq4.14} and \cref{eq4.15} follow. 
\end{proof}
For example it can easily be shown that for the two states given in \cref{eq4.11,eq4.12} $\psi=\frac{27}{80}\ket{00}+\frac{27}{80}\ket{11}+\frac{9}{40}\ket{22}+\frac{1}{10}\ket{33}$ and $\psi'=\frac{16}{27}\ket{00}+\frac{8}{27}\ket{11}+\frac{1}{10}\ket{22}$.

With the help of the following counterexample it can also be shown that \cref{axm4.7} cannot hold in general. Specifically suppose that $\phi$ and $\psi$ are two incomparable states and $\eta$ is an appropriate entanglement `catalyst' for this pair. Employing the commutativity property of the tensor product it is easy to show that 
\begin{labeq}{eq4.18}
	\phi^{\otimes n}\otimes\eta\to\psi^{\otimes n}\otimes\eta
\end{labeq}
\cref{eq4.18} is valid for arbitrarily large positive integers $n$ but $\phi\not\to\psi$ and $\psi\not\to\phi$ by assumption.

The existence of a positive non-increasing entropy function depends on this Axiom to be true for any pure entangled state, we have therefore demonstrated that the theory of pure state entanglement is formally inequivalent to that of classical thermodynamics. Nonetheless a weak form of the second law can be recovered in this case with the help of results from majorization theory. It can be proven that whenever \cref{eq4.2} holds for any two states there exist functions of the Schmidt coefficients such that 
\begin{labeq}{eq4.19}
	f(\phi)\geq f(\psi)
\end{labeq}
holds. These functions are known as \emph{entanglement monotones} and can be used as an entanglement measure \cite{Vidal2000a}, an example of such a function is the Shannon entropy of the Schmidt coefficients given by \cref{eq3.36}.

We conclude this section with a proof based on \cref{thrm4.1} that distillation and dilution of entanglement are irreversible.
\begin{thrm}\label{thrm4.4}
There exist no positive integers $n$ and $m$ such that
	\begin{labeq}{eq4.20}
		\phi^{\otimes n}\sim\Psi_d^{+\otimes m}
	\end{labeq}
\end{thrm}
\begin{proof}
The proof is similar to that of \cref{thrm4.2}. Suppose $\alpha_1$ and $\alpha_{d'}$ are the largest and smallest OSC's of $\phi$ where $d'$ is the state's Schmidt rank. According to \cref{thrm4.1} \cref{eq4.20} implies that $\alpha_1^n=d^{-m}$ and ${\alpha_{d'}^n=d^{-m}}$, it is easy to see that if $\alpha_1=\alpha_{d'}$ then $\phi$ must be a $d'$ dimensional maximally entangled state and \cref{eq4.20} holds trivially since then $d'^n=d^m$. For $\alpha_1>\alpha_{d'}$ a contradiction arises.
\end{proof}
\section{Thermodynamics of Entanglement.}\label{chap5}
So far it has been demonstrated that the theory of entanglement manipulations under deterministic LOCC transformations is formally inequivalent to that of classical thermodynamics when a finite number of entangled states is considered. This suggests that the class of LOCC is to restrictive. Moreover it can be shown that entanglement distillation of mixed states requires collective measurements to be carried out on a number of copies of the initial state, the greater this number the more successful the distillation \cite{Linden1998,Kent1998}. It might therefore be possible to construct a complete analogy between entanglement and thermodynamics by lifting some of the restrictions in order to allow a larger set of processes to become possible. This is also suggested by an apparent phenomenological similarity between the two theories. 

We begin this section by describing such an analogy between entanglement and thermodynamics. We will then demonstrate how the theory of pure state entanglement can be modified in such a way such that it possesses a thermodynamical structure. Concluding we shall discuss why the same approach does not work in the case of mixed states and how the strong form of the second law can nonetheless be derived by a different method.
\subsection{Thermodynamical analogies.}\label{sec4.2}
Mixed state entanglement transformations exhibit irreversibility, to make this clear it suffices to look at the distillation and dilution protocols for mixed states. On the one hand it is known that any $2\times d$ dimensional mixed state is distillable \cite{Horodecki1997c,Dur2000a}, on the other hand it has been proven that although bound entangled states require a non-zero amount of singlets to be formed it is nonetheless impossible to distill them  no matter how many copies we start with\cite{Vidal2001,Yang2005}, specifically if $\rho_b$ is a bound entangled state then
\begin{labeq}{eq4.26}
	0=E_D(\rho_b)<E_C(\rho_b)
\end{labeq}
This is the simplest example of an irreversible process. In \cite{Horodecki2002} it was argued that such irreversibility is also inherent in thermodynamics, specifically it is an immediate consequence of the second law that although useful work can be used in order to create a given heat bath, it is nonetheless impossible to recover it. This analogy was used to suggest that \emph{bound entanglement} is the analogue of \emph{heat}, in this context the \emph{entanglement cost} of an entangled state would be the analogue of \emph{internal energy}, and \emph{maximally entangled states} the analogue of \emph{useful work}. Distillation of entanglement would then consist of the following process
\begin{labeq}{eq4.21}
	\rho^{\otimes n}\to\psi_s^{\otimes nE_D}\otimes\rho_g
\end{labeq}
where $n$ copies of state $\rho$ are distilled into $nE_D(\rho)$ copies of singlets plus a leftover garbage state  $\rho_g$ containing an amount $E_b(\rho)$ of bound entanglement. Entanglement dilution would similarly be defined by
\begin{labeq}{eq4.22}
	\psi_s^{\otimes nE_D}\otimes\rho_g\to\rho^{\otimes n}
\end{labeq}
Supposing $E_C$ is an additive measure such that $E_C(\psi_s)=1$ and $E_C(\rho_g)=E_b$ \cref{eq4.21,eq4.22} would imply that 
\begin{labeq}{eq4.23}
E_C=E_D+E_b
\end{labeq} 
\cref{eq4.23} would be the entanglement analogue of the \emph{first law}, for a formulation of this law in the context of quantum information theory see also \cite{Horodecki1998a}.
  
By making use of the analogy any equation in thermodynamics would have a corresponding entangled version. As an example consider the entanglement analogue of the Gibbs-Helmholtz equation
\begin{labeq}{eq4.24}
	E_D=E_C-T_eS_e
\end{labeq}
where $T_e$ denotes the entanglement temperature and $S_e$ is a suitable entropy function  \cite{Horodecki1998}. According to \cref{eq4.24} the temperature of a given entangled state would then by given by 
\begin{labeq}{eq4.25}
	T_e=\frac{E_C-E_D}{S_e}
\end{labeq}

Unfortunately it seems that the distillation and dilution processes described in \cref{eq4.21,eq4.22} are irreversible as was demonstrated in \cite{Horodecki2002} where a counterexample was given. Moreover making the reasonable assumption that the usual von-Neumann entropy is a good candidate for $S_e$ \cref{eq4.25} would imply that any pure entangled state has zero temperature while the temperature of a separable state would be infinite.
\subsection{Thermodynamics of pure states.}\label{sec5.1}
In \cref{sec3.3} it was discussed how distillation and dilution of entanglement are mutually reversible processes in the limit of an asymptotic number of copies of the initial state. This is because $n$ copies of any pure state $\phi$ are a good approximation to $nE_\mathcal{S}-o(n)$ singlets, where $o(n)$ is an asymptotically vanishing real number. Using the fidelity $F(\phi,\psi)$ as a measure of the degree of similarity between two states \cite{Jozsa1994a} this can be formalized as
\begin{labeq}{eq5.1}
	\lim_{n\to\infty} F(\phi^{\otimes n},\psi_s^{\otimes nE_\mathcal{S}(\phi)-o(n)})=1
\end{labeq}
This reversibility was employed by Popescu and Rohrlich in order to construct an analogy between pure state entanglement and thermodynamics \cite{Popescu1996}. By making some simple assumptions it was pointed out that in this limit the only entanglement measure that exists is the entropy of entanglement $E_\mathcal{S}$. Under this approach an argument resembling the one about Carnot heat engines can be given which clearly demonstrates that entanglement is a non-increasing resource under LOCC\cite{Plenio1998a}. 

Specifically suppose there exists a distillation process which achieves a greater efficiency than the one given in \cref{eq3.26} which for a general pure state is equal to the entropy of entanglement. By using the more efficient method it would then be possible to extract more singlets out of $n$ copies of $\phi$ than with the one described in \cref{sec3.3}. Diluting these singlets we will then end up with a greater number of copies of $\phi$ than what we started with. Repeating the same process it is easy to see that after each distillation step a greater number of singlets than previously will be produced which can then be diluted into a greater number of copies of $\phi$ from which an even greater amount of singlets can be distilled \textit{ad infinitum}. If this were possible then we would be able to obtain an infinite amount of singlets from just a finite number of less entangled states essentially for free, the only solution to this problem is for entanglement to be non-increasing under LOCC much the same as thermodynamical entropy needs to be non-decreasing in order to avoid the existence of perpetual motion machines of the second kind.

In order to give a rigorous treatment of the thermodynamics of asymptotic entanglement manipulations we need to extend the definition of accessibility between states accordingly. This is usually done with the help of approximate transformations. The idea behind such an approach is that we consider the best LOCC transformation that maps $n$ copies of $\phi$ as close as possible to $n$ copies of $\psi$, if $\psi$ is \emph{asymptotically accessible} by $\phi$ (written as $\phi\asto\psi$) then in the limit $n\to\infty$ the transformation becomes exact. 
\begin{defin}\label{def5.1}
$\phi\asto\psi$ if and only if there exists a sequence of LOCC transformations $\{\Lambda_n\}$ such that 
	\begin{labeq}{eq5.2}
		\lim_{n\to\infty}\[\inf_{\{\Lambda_n\}}\norm{\Lambda_n(\phi^{\otimes n})-\psi^{\otimes n}}\]=0
	\end{labeq}
\end{defin}

Under a similar definition it was proven in \cite{Vedral2002} that  \crefrange{axm4.1}{axm4.8} are satisfied. The asymptotic theory of pure state entanglement is therefore formally equivalent to the classical theory of thermodynamics.  The entropy of entanglement singles out once more as the unique entanglement measure of the theory. We shall now attempt and do the same by providing an alternative definition of asymptotic accessibility between pure states.

To begin with note that although there exist pure states that are mutually incomparable under deterministic LOCC transformations, like the states given in \cref{eq4.11,eq4.12} for example, it is nonetheless possible to \emph{stochastically} transform one into the other \cite{Vidal1999a},  the probability of success $P(\phi\to\psi)$ for such a stochastic transformation of $\phi$ into $\psi$  is given by
\begin{labeq}{eq5.3}
	P(\phi\to\psi)=\min_{l\in[1,n]}\frac{E_l(\phi)}{E_l(\psi)}
\end{labeq}
where $E_l(\phi)=\sum_{i=l}^n\alpha_i$ and $\alpha_i$ are the OSCs of $\phi$, this probability is nonzero only when the Schmidt rank of the initial state is greater than the Schmidt rank of the final state. We now proceed by giving our definition of \emph{asymptotic accessibility}.
\begin{defin}\label{def5.2}
$\phi\asto\psi$ if and only if
	\begin{labeq}{eq5.4}
		\lim_{n\to\infty}P(\phi^{\otimes n}\to\psi^{\otimes n})= 1
	\end{labeq}
\end{defin}
It can be proven that the set of entangled pure states is totally ordered with respect to the ordering induced by \cref{def5.2}.
\begin{lem}\label{lem5.1} 
For any two entangled pure states $\psi$ and $\phi$ either
	\begin{labeq}{eq5.5}
		\phi\asto\psi\quad\text{or}\quad\psi\asto\phi
	\end{labeq}
\end{lem}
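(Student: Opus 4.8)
The plan is to attach to each entangled pure state the single number $E_\mathcal{S}(\cdot)=-\sum_i\alpha_i\log\alpha_i$ (the Shannon entropy of its Schmidt coefficients) and to show that the state with the larger value asymptotically accesses the other; since $\mathbb{R}$ is totally ordered this yields the dichotomy at once. Assume therefore without loss of generality that $E_\mathcal{S}(\phi)\geq E_\mathcal{S}(\psi)$; it suffices to prove $\phi\asto\psi$, that is $P(\phi^{\otimes n}\to\psi^{\otimes n})\to 1$ in the sense of \cref{def5.2}, where the exact value of this probability is supplied by \cref{eq5.3}.

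First I would control the Schmidt spectra of the tensor powers. The Schmidt coefficients of $\phi^{\otimes n}$ are the products $\prod_{k=1}^n\alpha_{i_k}$, and exactly as in the distillation argument of \cref{sec3.3} (cf. \cref{eq3.20,eq3.21}) the law of large numbers shows that all but an exponentially small fraction of the total weight sits on $\approx 2^{nE_\mathcal{S}(\phi)}$ ``typical'' coefficients, each of size $\approx 2^{-nE_\mathcal{S}(\phi)}$. Rephrased in terms of the tail sums $E_l(\phi^{\otimes n})$ appearing in \cref{eq5.3}: with $l=\lceil 2^{nR}\rceil$ one has $E_l(\phi^{\otimes n})\to 1$ exponentially fast when $R<E_\mathcal{S}(\phi)$ and $E_l(\phi^{\otimes n})\to 0$ when $R>E_\mathcal{S}(\phi)$, together with the corresponding large--deviation rates; the same holds for $\psi$.

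The core step is to bound $E_l(\phi^{\otimes n})/E_l(\psi^{\otimes n})$ from below by $1-o(1)$ uniformly in $l$. I would split the range of $l$ into three parts. For $l\leq 2^{n(E_\mathcal{S}(\psi)-\epsilon)}$ both tail sums equal $1-2^{-cn}$, so the ratio is $1-o(1)$. For $l\geq 2^{n(E_\mathcal{S}(\psi)+\epsilon)}$ the denominator $E_l(\psi^{\otimes n})$ is exponentially small, so the constraint bites only once $E_l(\phi^{\otimes n})$ has also become small, i.e. for $l\gtrsim 2^{nE_\mathcal{S}(\phi)}$; there one must combine $E_\mathcal{S}(\phi)\geq E_\mathcal{S}(\psi)$ with a comparison of the small--coefficient large--deviation rates of the two i.i.d. spectra to see that the tail of $\phi^{\otimes n}$ is never ``ahead'' of that of $\psi^{\otimes n}$ by more than a sub-exponential factor. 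The intermediate window is then closed by monotonicity of $E_l$ in $l$ and interpolation. Combining the three parts gives $\min_l E_l(\phi^{\otimes n})/E_l(\psi^{\otimes n})\geq 1-o(1)$, and since this quantity never exceeds $1$ it tends to $1$; hence $\phi\asto\psi$.

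I expect the obstacle to be twofold. The lesser difficulty is that the minimum in \cref{eq5.3} runs over exponentially many indices, so the asymptotic equipartition property must be strengthened to uniform large--deviation bounds, the delicate case being precisely the one in which \emph{both} tail sums are exponentially small. The more serious point is a compatibility condition hidden in \cref{eq5.3}: $P(\phi^{\otimes n}\to\psi^{\otimes n})$ vanishes identically as soon as the Schmidt rank of $\phi$ is strictly below that of $\psi$, since LOCC cannot raise the Schmidt rank and $E_l(\phi^{\otimes n})=0<E_l(\psi^{\otimes n})$ for $l$ between the two ranks. Thus the direction of accessibility picked out by the entropy must be reconciled with the Schmidt-rank ordering --- or, failing that, the precise extra hypothesis under which the lemma holds must be isolated --- and this is the part of the argument I would have to handle with most care.
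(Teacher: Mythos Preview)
Your route and the paper's diverge at the very first move. The paper does \emph{not} order the states by $E_\mathcal{S}$; it assumes without loss of generality that $\phi$ has the larger \emph{Schmidt rank}, so that $P(\phi^{\otimes n}\to\psi^{\otimes n})$ is automatically nonzero and the ``serious obstacle'' you flag at the end of your sketch never arises. It then short-circuits the whole large-deviation analysis by invoking \cref{eq5.1}: $n$ copies of an entangled pure state are, up to asymptotically vanishing error, a maximally entangled state of dimension $2^{nE_\mathcal{S}-o(n)}$. After this replacement the tail sums $E_l$ are explicit affine functions of $l$ and the minimum in \cref{eq5.3} is evaluated in one line (the paper's \cref{eq5.6}), rather than via a three-regime split and uniform exponential bounds.

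What each approach buys: the paper's argument is very short because all the typicality has already been packaged into \cref{eq5.1}, and the rank ordering disposes of your compatibility problem for free. Your plan is more honest about the analysis hiding behind \cref{eq5.1}; going through the AEP/large-deviation estimates directly would make the proof self-contained, and your identification of the uniform-in-$l$ issue is exactly the right technical point. But by ordering by entropy rather than rank you create a genuine gap: if $E_\mathcal{S}(\phi)\geq E_\mathcal{S}(\psi)$ while $d_\phi<d_\psi$, then $P(\phi^{\otimes n}\to\psi^{\otimes n})=0$ for every $n$ and your argument cannot conclude $\phi\asto\psi$. You notice this yourself but do not resolve it; the resolution is simply to choose, as the paper does, the state of larger Schmidt rank as the source.
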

\begin{proof}
Suppose without loss of generality that $\phi$ has a greater number of Schmidt coefficients than $\psi$, we need to calculate the left hand side of \cref{eq5.4} and show that it is equal to one, according to \cref{eq5.1} $n$ copies of any pure entangled state are approximately equal to a maximally entangled state of $2^{nE_\mathcal{S}-o(n)}$ dimension, taking this into account 
\begin{labeq}{eq5.6}
	\lim_{n\to\infty}P(\phi^{\otimes n}\to\psi^{\otimes n})=\lim_{n\to\infty}\left\{\min_l\frac{(2^{nE_\mathcal{S}(\phi)-o(n)}-l+1)}{(2^{nE(\psi)-o'(n)}-l+1)}\frac{2^{-nE_\mathcal{S}(\phi)+o(n)}}{2^{-nE_\mathcal{S}(\psi)+o'(n)}}\right\}
\end{labeq}
For a finite value of $l$ the right hand side of \cref{eq5.6} is equal to one this is also true if $l=2^{nE_\mathcal{S}(\phi)-o(n)}$.\qedhere
\end{proof}

What remains now is to show that \crefrange{axm4.1}{axm4.8} are satisfied. \cref{axm4.1,axm4.2} are trivial so we proceed by giving a proof for the remaining Axioms.
\begin{proof}[\cref{axm4.3}]
Suppose $\phi_1\asto\psi$ and $\psi\asto\phi_2$ and let $p_n=P(\phi_1^{\otimes n}\to\psi^{\otimes n})$ and $q_n=P(\psi^{\otimes n}\to\phi_2^{\otimes n})$, it follows that $P(\phi_1^{\otimes n}\to\phi_2^{\otimes n})=p_nq_n$ this means that $\lim_{n\to\infty}P(\phi_1^{\otimes n}\to\phi_2^{\otimes n})=\lim_{n\to\infty}p_nq_n=\lim_{n\to\infty}p_n\lim_{n\to\infty}q_n=1$.
\end{proof}
\begin{proof}[\cref{axm4.4}]
	\item($\phi_1\asto\phi_2\implies\phi_1\otimes\psi\asto\phi_2\otimes\psi$): If $d_{\phi_1},d_{\phi_2},d_\psi$ are the Schmidt ranks of $\phi_1,\phi_2$ and $\psi$ respectively then by assumption $d_{\phi_1}\geq d_{\phi_2}$ from which $d_{\phi_1}d_\psi\geq d_{\phi_2}d_\psi$. According to \cref{lem5.1} $\phi_1\otimes\psi\asto\phi_2\otimes\psi$.

	\item{($\phi_1\otimes\psi\asto\phi_2\otimes\psi\implies\phi_1\asto\phi_2$):} If $d_{\phi_1},d_{\phi_2},d_\psi$ are the Schmidt ranks of $\phi_1,\phi_2$ and $\psi$ respectively then by assumption $d_{\phi_1}d_\psi\geq d_{\phi_2}d_\psi$ from which $d_{\phi_1}\geq d_{\phi_2}$. According to \cref{lem5.1} $\phi_1\asto\phi_2$.\qedhere
\end{proof}
Since according to \cref{lem5.1} the set of pure entangled states is totally ordered with respect to $\asto$ \cref{axm4.5a,axm4.5b} are trivially satisfied. For the same reason \cref{axm4.6} is also true, it is natural to choose the singlet state as the internal state of the theory.
\begin{proof}[\cref{axm4.7}]
Let $\chi_1$, $\chi_2$ be two pure states such that ${\phi^{\otimes n}\otimes\chi_1\asto\psi^{\otimes n}\otimes\chi_2}$ holds for arbitrarily large values of $n$. If $d_\phi,d_{\chi_1},d_\psi,d_{\chi_2}$ are the Schmidt ranks of each state then $d_{\phi}^nd_{\chi_1}\geq d_{\psi}^nd_{\chi_2}$ must hold for arbitrarily large positive values of $n$ taking the logarithm of both sides, dividing by $n$ and taking the limit $n\to\infty$ results in $d_\phi\geq d_\psi$ so by \cref{lem5.1} $\phi\asto\psi$.
\end{proof}
\begin{proof}[\cref{axm4.8}]
The set of equilibrium states is simply the set of separable pure states, since the ordering is already total there is no need to introduce them.\let\qed\relax
\end{proof}

Following the discussion at the end of \cref{chap2} there exists a positive non-increasing additive function of state $S$. To determine it's value for any pure state we construct the following sets
\begin{labeq}{eq5.7}
	U(\phi):=\Set{\frac{m}{n}|\psi_s^{\otimes m}\asto\phi^{\otimes n}}
\end{labeq}
\begin{labeq}{eq5.8}
	L(\phi):=\Set{\frac{m}{n}|\phi^{\otimes n}\asto\psi_s^{\otimes m}}
\end{labeq}
$S(\phi)$ is now given by the Dedekind cut of these two sets
\begin{labeq}{eq5.9}
	S(\phi)=\sup L(\phi)=\inf U(\phi)
\end{labeq}
By definition this function is additive, non-increasing and asymptotically continuous having a value of $\log d$ for a d-dimensional maximally entangled state, by the uniqueness theorem for entanglement measures \cite{Donald2002} it follows it is equal to the entropy of entanglement $E_\mathcal{S}$. We can now formulate the analogue of the strong form of the second law for pure states.
\begin{figure}
\begin{center}
\begin{tikzpicture}[scale=0.6]
\draw[color=gray] plot[mark=*,mark options={fill=black,draw=black},ycomb] file {plot.txt};
\draw[->,thick] (0,0) -- (0,10.4) node[right] {$E_\mathcal{S}$};
\draw[->,thick] (0,0) -- (10.4,0) node[right] {$\log dim\mathcal{H}$};
\foreach \x in {0,...,10}
     		\draw (\x,1pt) -- (\x,-3pt)
			node[anchor=north] {\x};
    	\foreach \y in {0,...,10}
     		\draw (1pt,\y) -- (-3pt,\y) 
     			node[anchor=east] {\y}; 
\end{tikzpicture}
\end{center}
\caption{\textbf{Set of pure entangled states in thermodynamic space.} Full circles represent maximally entangled states of corresponding dimension, the set of separable states lies on the axis $E_\mathcal{S}=0$.}\label{fig5.1}
\end{figure}
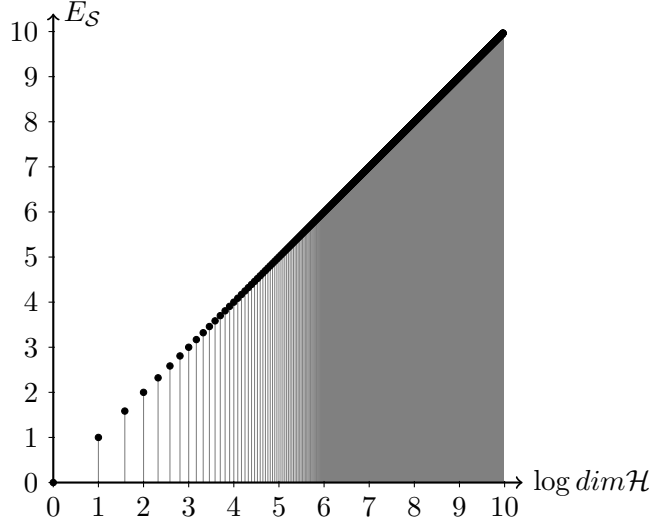
\begin{stat}[\textbf{Second law of pure state entanglement thermodynamics.}]\label{thrm5.2}
	\begin{labeq}{eq5.10}
		\phi\asto\psi\quad\text{if and only if}\quad E_\mathcal{S}(\phi)\geq E_\mathcal{S}(\psi)
	\end{labeq}
\end{stat}

As with classical thermodynamics the set of pure entangled states can be represented in \emph{thermodynamic space} where the entropy of a state is plotted against it's components of content. In this case there is only one component of content given by the logarithm of the dimension of the state's Hilbert space $\log(dim\mathcal{H})$, let's call this the \emph{dimensionality} of the state. Recall that components of content must be additive functions of state a property shared by this function. 

The thermodynamic space for pure state entanglement is pictured in \cref{fig5.1}. This space is discrete becoming continuous in the limit of very large dimensions. The anti-equilibrium surface, comprised of the maximally entangled pure states, lies on the line $E_\mathcal{S}=\log dim(\mathcal{H})$. It's derivative with respect to the dimensionality is equal to one and defines an intensive quantity for the set of anti-equilibrium states which can be thought of as representing a sort of \emph{entanglement density}. The fact that tensor products of maximally entangled states always result in a higher dimensional maximally entangled state can be simply expressed by saying that all such states have the same density, furthermore during any entanglement processes entanglement density decreases. These conditions are reasonable in the context of entanglement theory.
\subsection{Thermodynamics of mixed states.}\label{sec5.2}
Since the asymptotic limit successfully leads to a thermodynamical description for pure states it is natural to ask the question whether this is also possible for mixed states. Unfortunately under asymptotic LOCC transformations the set of mixed states is only partially ordered, this is the reason why so many different entanglement measures exist \cite{Morikoshi2004}. To demonstrate this consider the following two states, one composed of $n$ copies of a bound entangled state $\rho_b$ and the other composed of $m$ singlets, note that $n$ copies of a positively partially transposed matrix also possess the same property so $\rho_b^{\otimes n}$ is also a bound entangled state. In order to create this state we need to dilute at least $nE_C(\rho_b)$ singlets where $E_C$ is the entanglement cost defined in \cref{eq3.42}, if $m<nE_C(\rho_b)$  then
\begin{labeq}{eq5.11}
	P_s^{\otimes m}\nasto\rho_b^{\otimes n}
\end{labeq}
According to \cref{sec3.3} it is impossible to distill a bound entangled state so
\begin{labeq}{eq5.12}
	\rho_b^{\otimes n}\nasto P_s^{\otimes m}
\end{labeq}
these two states are therefore incomparable. 

Different entanglement measures may also give rise to different partial orderings on the set of mixed states \cite{Virmani2000} unless they happen to coincide, i.e. in one measure $E(\rho)\leq E(\sigma)$ while in another $E'(\sigma)\leq E'(\rho)$ might hold so there is no effective way to compare the two states. Irreversibility was also discussed  in \cref{sec4.2} .

In order to attempt to construct the thermodynamics of mixed state entanglement, a further extension of the theory to a larger class $\mathcal O$ of allowed operations such that $LOCC\subseteq\mathcal{O}$  is therefore necessary, \cref{def5.1} is readily generalized in this case. 
\begin{defin}\label{def5.3}
$\rho\exasto{O}\sigma$ if and only if there exists a sequence of transformations $\{\Lambda_n\}\in\mathcal{O}$ such that 
	\begin{labeq}{eq5.13}
		\lim_{n\to\infty}\[\inf_{\{\Lambda_n\}}\norm{\Lambda_n(\rho^{\otimes n})-\sigma^{\otimes n}}\]=0
	\end{labeq}
\end{defin}

It is currently uncertain if a treatment of the thermodynamics of entanglement based on an axiomatic formulation is possible for mixed states as was for pure. This is because the reverse statement of \cref{axm4.4} and \cref{axm4.7} are hard to prove in this case. Nonetheless it is possible to derive the strong form of the second law employing a method based on asymptotic entanglement measures. 

Specifically by generalizing \cref{eq3.43} under a given class of operations $\mathcal O$, entanglement of distillation $E_D^{\mathcal O}$ and entanglement cost $E_C^{\mathcal O}$ defined analogously to \cref{eq3.41,eq3.42} must be extreme measures in the sense that any asymptotically continuous entanglement measure E must lie between the two
\begin{labeq}{eq5.14}
	E_D^{\mathcal O}(\rho)\leq E(\rho) \leq E_C^{\mathcal O}(\rho)
\end{labeq}
If it were possible to prove that $E_D^{\mathcal O}(\rho)=E_C^{\mathcal O}(\rho)$ for every $\rho\in\mathcal D$ then it would immediately follow that a unique measure of entanglement $E$ exists which would be non-increasing under these transformations. A simple example of this approach is the set of LOCC operations for which it is known that ${E_D^{LOCC}(\phi)=E_C^{LOCC}(\phi)}$ for any pure state $\phi$. The strong form of the second law can now be proven as a theorem. 
\begin{thrm}\label{thrm5.2}
For any two states $\rho,\sigma\in\mathcal{D}$ 
\begin{labeq}{eq5.15}
	\rho\exasto{O}\sigma\quad\text{if and only if}\quad E(\rho)\geq E(\sigma)
\end{labeq}
\end{thrm}
\begin{proof}
By construction $E$ is non-increasing so $\rho\exasto{O}\sigma$ immediately implies that $E(\rho)\geq E(\sigma)$. 

To prove the converse let $\{\Lambda_n\}$ be an optimal distillation protocol for $\rho$, $\{\Lambda_n''\}$ an optimal dilution protocol for $\sigma$ and $\{\Lambda_n'\}$ a sequence of transformations in $\mathcal{O}$ such that $$\lim_{n\to\infty}\norm{\Lambda_n'(P_s^{\otimes nE(\rho)-o(n)})-P_s^{\otimes nE(\sigma)+o'(n)}}=0$$ Setting ${\Lambda_n'''=\Lambda_n''\circ\Lambda_n'\circ\Lambda_n}$ it can now be shown that
\begin{align}\label{eq5.16}
	\norm{\Lambda_n'''(\rho^{\otimes n})-\sigma^{\otimes n}}&\leq\norm{\Lambda_n'''(\rho^{\otimes n})-\Lambda_n''(P_s^{\otimes nE(\sigma)+o'(n)})}\nonumber
	\\&+\norm{\Lambda_n''(P_s^{\otimes nE(\sigma)+o'(n)})-\sigma^{\otimes n}}\nonumber
	\\&\leq\norm{\Lambda_n'(\Lambda_n(\rho^{\otimes n}))-P_s^{\otimes nE(\sigma)+o'(n)}}\nonumber
	\\&+\norm{\Lambda_n''(P_s^{\otimes nE(\sigma)+o'(n)})-\sigma^{\otimes n}}\nonumber
	\\&\leq\norm{\Lambda_n'(\Lambda_n(\rho^{\otimes n}))-\Lambda_n'(P_s^{\otimes nE(\rho)-o(n)})}\nonumber
	\\&+\norm{\Lambda_n'(P_s^{\otimes nE(\rho)-o(n)})-P_s^{\otimes nE(\sigma)+o'(n)}}\nonumber
	\\&+\norm{\Lambda_n''(P_s^{\otimes nE(\sigma)+o'(n)})-\sigma^{\otimes n}}\nonumber
	\\&\leq\norm{\Lambda_n(\rho^{\otimes n})-P_s^{\otimes nE(\rho)-o(n)}}\nonumber
	\\&+\norm{\Lambda_n'(P_s^{\otimes nE(\rho)-o(n)})-P_s^{\otimes nE(\sigma)+o'(n)}}\nonumber
	\\&+\norm{\Lambda_n''(P_s^{\otimes nE(\sigma)+o'(n)})-\sigma^{\otimes n}}
\end{align}
\noindent the first and third inequalities result from the triangular inequality for the trace norm while the second and fourth from the non-increase of the trace norm under trace preserving operations \cite{RUSKAI1994}. Taking the limit $n\to\infty$ of both sides of \cref{eq5.15} result in $\lim_{n\to\infty}\norm{\Lambda_n'''(\rho^{\otimes n})-\sigma^{\otimes n}}=0$ so according to \cref{def5.3} $\rho\exasto{O}\sigma$.
\end{proof}
It would be interesting to try and prove that \crefrange{axm4.1}{axm4.8} are satisfied by taking the strong form of the second law as granted. This backwards approach is possible for classical thermodynamics. Due to the additivity of the entropy function all of the Axioms are reduced to relations between real numbers which hold trivially. By construction the unique entanglement measure $E$ is equal to the entanglement cost which is a subadditive function so this approach no longer applies.

A treatment of mixed state entanglement based on \cref{eq5.14} has been applied to two classes of operations \emph{positive partial transpose (PPT) preserving operations} and \emph{asymptotically non entangling operations}.
\subsubsection{PPT preserving operations.}\label{sub5.2.1}
The set of PPT preserving operations is defined by
\begin{defin}\label{def5.3}
A positive trace preserving operation $\Lambda$ is called PPT-preserving if for every density matrix $\sigma$
	\begin{labeq}{eq5.17}
		\Lambda(\sigma^{T_X})^{T_X}\geq0\quad X=A,B
	\end{labeq}
\end{defin}
It has been proven that under PPT-preserving transformations any negatively partially transposed (NPT) state can be distilled into the singlet form \cite{Eggeling2001}. Moreover for the special case of antisymmetric Werner states
\begin{labeq}{eq5.18}
\rho_{\scriptscriptstyle\mathcal{W}}=\frac{I-\Pi_d}{d(d-1)}
\end{labeq}
where $\Pi_d=\sum_{i,j}\ket{ij}\bra{ji}$, entanglement of distillation and entanglement cost are equal \cite{Audenaert2003}. It is not known whether $E_D^{ppt}=E_C^{ppt}$ holds in general for any mixed state.
\subsubsection{Asymptotically non-entangling operations.}\label{sub5.1.2}
A straightforward way of describing a class of operations can be given by formulating mathematically their limitations, for example LOCC operations are a subset of the set of separable operations SEPP which can be defined as those operations that cannot create entangled states from separable ones. Similarly PPT-preserving transformations are defined as those that can never output NPT states when the inputs are PPT. One can go even further and lift the restriction of quantum communication between two parties or even allow them to create a finite amount of entanglement. In  \cite{Plenio2008,Brandao2010} the authors considered the class of \emph{asymptotically non entangling operations} $SEPP(\epsilon)$. Specifically they allowed the possibility that the output of a quantum operation $\Omega$ on a separable state might become entangled, this causes no problems as long as this amount vanishes asymptotically.
\begin{defin}\label{def5.4}
A sequence $\{\Omega_n\}$ is called asymptotically non-entangling if
	\begin{labeq}{eq5.19}
		\lim_{n\to\infty}\epsilon_n=0
	\end{labeq}
where $\epsilon_n=E(\Omega_n(\sigma^{\otimes n}))$ is the amount of entanglement that is created when $\Omega_n$ acts on $n$ copies of a separable state $\sigma$ and $E$ is any entanglement measure.
\end{defin}
The importance of these transformations can be appreciated by the following theorem.
\begin{thrm}\label{thrm5.3}
	For any state $\rho$ 
		\begin{labeq}{eq5.20}
			E_D^{SEPP(\epsilon)}(\rho)=E_C^{SEPP(\epsilon)}(\rho)
		\end{labeq}
\end{thrm}
The most surprising consequence of \cref{thrm5.3} is that it holds true for multipartite states as well. The unique entanglement measure in this case was proven to be equal to the regularized relative entropy of entanglement $E_R^{\infty}$. Because the relation induced by the set of asymptotically entangling operations makes the set of entangled states totally ordered, $SEPP(\epsilon)$ can be considered as describing the natural analogue of adiabatic processes.  The second law of entanglement thermodynamics therefore reads
\begin{stat}[Second law of entanglement thermodynamics under $SEPP(\varepsilon)$.]\label{thrm5.4}
	\begin{labeq}{eq5.21}
		\rho\prec\sigma\quad\text{if and only if}\quad E_R^{\infty}(\rho)\geq E_R^{\infty}(\sigma)
	\end{labeq}
\end{stat}
\section{Conclusions.}\label{chap6}
Although the theory of bipartite entanglement possesses a thermodynamical structure for the special case of pure states, the question is still open for mixed states. The existence of the second law for asymptotically non-entangling operations indicates that this is very likely. Furthermore the current solutions to the problem are more of a theoretic than of practical interest mainly because of the asymptotic number of states required, this is clearly unrealistic for every day experiments in a lab. A detailed examination of mixed state entanglement based on the Axiomatic formulation of thermodynamics is therefore necessary. An important step towards this direction would be the development of necessary and sufficient conditions for transforming any entangled state to another which for pure states and LOCC operations reduces to Nielsen's theorem.

A correspondence between entanglement and thermodynamics if possible could then open the way to using methods from one theory for solving problems in the other. This would surely result in new and exciting phenomena in physics.

\appendix
\addcontentsline{toc}{section}{Appendix.}
\section{Schmidt decomposition.}\label{appa}

A pure bipartite state $\varphi$ belonging to a $\mathbb{C}^{d_A}\otimes\mathbb{C}^{d_B}$ Hilbert space (without loss of generality assume that $d_A\geq d_B$) is given by
\begin{labeq}{app.a1}
	\varphi_{\scriptscriptstyle AB}=\sum_{i=1}^{d_A}\sum_{j=1}^{d_B}c_{ij}\ket{\alpha_i}\ket{\beta_j}
\end{labeq}
where $\left\{\ket{\alpha_i}\right\}_{i=1}^{d_A}$ and $\left\{\ket{\beta_j}\right\}_{j=1}^{d_B}$ form an orthonormal set of bases for Hilbert spaces $\mathbb{C}^{d_A}$ and $\mathbb{C}^{d_B}$ respectively and $c_{ij}\in\mathbb{C}$ such that $\sum_{i,j}|c_{ij}|^2=1$. By treating the complex coefficients in \cref{app.a1} as the components of a  $d_A\times d_B$ dimensional complex matrix $C$ and by applying the \emph{singular value decomposition} theorem there exist unitary matrices $U$ and $V$ such that ${C=U\Sigma V^\dagger}$ with $\Sigma$ a $d_A\times d_B$ dimensional, real, non-negative diagonal matrix. It's components $\sigma_k$ are called the \emph{singular values} of the matrix $C$. Replacing $c_{ij}=\sum_ku_{ik}\sigma_kv_{jk}^\ast$ into \cref{app.a1} transforms the state into.
\begin{labeq}{app.a2}
	\varphi_{\scriptscriptstyle AB}=\sum_{k=1}^{d_A}\sigma_k\(\sum_{i=1}^{d_A}u_{ik}\ket{\alpha_i}\)\(\sum_{j=1}^{d_B}v_{jk}^\ast\ket{\beta_j}\)
\end{labeq}
It is easy to check that $\indexket{k}{A}=\sum_{i=1}^{d_A}u_{ik}\ket{\alpha_i}$ and $\indexket{k}{B}=\sum_{j=1}^{d_B}v_{jk}^\ast\ket{\beta_j}$ form orthonormal sets for $\mathbb{C}^{d_A}$ and $\mathbb{C}^{d_B}$ respectively. 
Setting $\sigma^2_k=\lambda_k$ results in the \emph{Schmidt decomposition} of $\varphi$.
\begin{labeq}{app.a3}
	\varphi_{\scriptscriptstyle AB}=\sum_{i=1}^{d_A}\sqrt{\lambda_i}\ket{ii}
\end{labeq}

\section{Teleportation.}\label{appb}
We shall describe the teleportation protocol for the simple case of a pure qubit $\phi=\alpha\ket{0}+\beta\ket{1}$ where without loss of generality $\alpha$ and $\beta$ are real coefficients with $\alpha^2+\beta^2=1$. Suppose Alice is in possession of the qubit and wishes to teleport it to Bob, to do this they need to share in advance entanglement in the form of a maximally entangled singlet state. The combined state of the qubit and singlet is given by $\phi_{\scriptscriptstyle A'}\otimes\Psi^-_{\scriptscriptstyle AB}$. To show how this is possible it suffices to rewrite Alice's state consisting of the qubit particle and her half of the singlet in the Bell basis.
\begin{align}\label{app.b2}
	\phi_{\scriptscriptstyle A'}\otimes\Psi^-_{\scriptscriptstyle AB}=&\frac{1}{\sqrt{2}}\left[\alpha(\prodket{00}{}{1}{}-\prodket{01}{}{0}{})+\beta(\prodket{10}{}{1}{}-\prodket{11}{}{0}{})\right]_{\scriptscriptstyle A'AB}\nonumber\\
	=&\frac{1}{2}[\alpha(\Phi^++\Phi^-)_{\scriptscriptstyle A'A}\otimes\ket{1}_{\scriptscriptstyle B}-\alpha(\Psi^++\Psi^-)_{\scriptscriptstyle A'A}\otimes\ket{0}_{\scriptscriptstyle B}\nonumber\\
	&+\beta(\Psi^+-\Psi^-)_{\scriptscriptstyle A'A}\otimes\ket{1}_{\scriptscriptstyle B}-\beta(\Phi^++\Phi^-)_{\scriptscriptstyle A'A}\otimes\ket{0}_{\scriptscriptstyle B}]\nonumber\\
	=&\frac{1}{2}[\Phi^+_{\scriptscriptstyle A'A}\otimes(\alpha\ket{1}+\beta\ket{0})_{\scriptscriptstyle B}+\Phi^-_{\scriptscriptstyle A'A}\otimes(\alpha\ket{1}-\beta\ket{0})_{\scriptscriptstyle B}\nonumber\\
	&+\Psi^+_{\scriptscriptstyle A'A}\otimes(-\alpha\ket{0}+\beta\ket{1})_{\scriptscriptstyle B}+\Psi^-_{\scriptscriptstyle A'A}\otimes(-\alpha\ket{0}-\beta\ket{1})_{\scriptscriptstyle B}]
\end{align}
All Alice has to do now is perform a von-Neumann measurement to determine the Bell state of her pair and then inform Bob about the result of her measurement using a classical channel \footnote{Note that in order to do so assuming the channel is noiseless she has to use two classical bits.}. Depending on the message he received Bob can simply apply a suitable unitary rotation on his particle which will always result in him obtaining the desired state $\phi_{\scriptscriptstyle B}$. 

More precisely if Alice finds that her pair is in a $\Psi^-$ state then one look at \cref{app.b2} is enough to show that there is no need for Bob to rotate his particle since apart from an overall phase it's state is exactly the one he wished to receive. For a $\Phi^+$ result he must apply a $\sigma_x$ Pauli rotation which flips the basis vectors, similarly for $\Phi^-$ he must apply $\sigma_z\sigma_x$ and for $\Psi^+$ $\sigma_z$. This method is readily generalized to the case where it is desired to teleport a q-dit, in this case it is necessary to share a $\Psi_d^+$ state.

\addcontentsline{toc}{section}{References.}
\bibliographystyle{ieeetr}
\bibliography{Thesis}
\end{document}